\newtheorem{claim}{Claim}
\newtheorem{lemma}{Lemma}
\newtheorem{proposition}{Proposition}
\begin{document}

\title{Competition for being visited first and ordered search deterrence%
\thanks{%
The authors thank Mark Whitmeyer and the audience of Theory Pre-Conference
to the 2025 ESWC, Seoul, Korea for questions, comments and suggestions.}}
\author{Wojciech Olszewski and Yutong Zhang\thanks{%
Department of Economics, Northwestern University; wo@northwestern.edu and
zhangyutong2017@u.northwestern.edu, respectively}}
\date{December 2025}
\maketitle

\begin{abstract}
When customers must visit a seller to learn the valuation of its product, sellers potentially benefit from charging a lower price on the first visit and a higher price when a buyer returns. \cite{armstrong2016search} show that such price discrimination can arise in equilibrium when buyers learn a seller's pricing policy only upon visiting. We depart from this assumption by supposing that sellers commit to observable pricing policies that guide consumer search and buyers can choose whom to visit first. We show that no seller engages in price discrimination in
equilibrium.
\end{abstract}

\section{Introduction}

Purchasing goods often requires search. People rarely buy homes without inspection or shoes without trying them on. Before arranging home improvements, they usually consult multiple contractors; when booking flights, consumers compare options across several websites. In many such settings, a seller can record whether a buyer is visiting for the first time or returning after an initial visit without purchase. In settings where recording is possible, it seems that a seller can benefit from charging a lower price on the first visit and a higher price upon return. Indeed, the greater price upon return discourages search, i.e., checking other options by visiting other sellers. 
Such pricing policies are often referred to as buy-now discounts.

We indeed observe buy-now discounts in practice. Anyone considering home improvements has likely been offered a special discount contingent on signing a contract during the initial consultation. And many people use multiple devices when shopping for flights in order not to lose a good price by leaving a website before making the purchase. There is also, mainly anecdotal, evidence from other markets, e.g., health clubs, car and house rentals, and labor markets. In contrast, buy-now discounts are typically not offered by real-estate agents or stores, even in cases in which stores could record visits.

Our paper highlights an important difference between these two settings. Inducements for quick decisions are offered by salesmen or contractors casually during the course of one-to-one encounters. Committing to a certain pricing policy (or building a reputation thereof) seems rather difficult in such settings. It seems equally difficult for websites selling air tickets, where prices must quickly respond to changing supply and demand. Commitments are much easier to sustain, and reputations are much easier to build for stores and real-estate agencies.

More specifically, we propose a model with two sellers (firms) and one buyer. Each seller offers a product for sale and the buyer is interested in purchasing only one of the two products. Firms post and commit to pricing policies\footnote{We argue later that commitment can also be viewed as a shorthand for reputation for a certain pricing policy.} (that is, a price for buying during the first visit, and a price for returning customers), and the buyer, after observing the posted prices, decides which firm to visit first. The buyer learns her valuation of the product offered by the visited seller. Then she chooses whether to purchase the product at the price for first comers, or to visit the other seller and learn her valuation of the other product. If the buyer visits the other seller, and so learns both valuations, she has 3 options: (i) return to purchase from the first seller at the return price, (ii) purchase from the second seller at the first-visit price, or (iii) leave the market. Our main result establishes that no firm engages in price discrimination in a symmetric equilibrium when search costs are zero. Moreover, this is the only symmetric equilibrium in pure strategies at least for uniform distribution of valuations. Numerical analysis suggests that such an equilibrium exists also for small positive costs.

Our finding contrasts with \cite{armstrong2016search}. In the primary version of their model, one seller posts two prices (one for buying during the first visit, and one for returning customers). A buyer can either buy from the seller or check her valuation of an exogenous outside option, and then decide whether to return or accept the outside option. In equilibrium, the seller engages in price discrimination against returning customers. \cite{armstrong2016search} note that the most natural outside option is buying from another seller. So they propose a duopoly version of their model, but find solving the model analytically intractable. They solve the model numerically for the uniform distribution of valuations, which confirms their finding that sellers offer in a symmetric equilibrium higher prices upon return to deter search.\footnote{We emphasize their numerical result for duopolies and uniform distribution of valuations, because it sharply contrasts with our findings. However, \cite{armstrong2016search} also analytically derive the equilibrium conditions under a general distribution of valuations. In addition, their paper contains other results, e.g., they study more general selling mechanisms beyond simple pricing policies.} As noted in the working paper version \cite{armstrong2013search}, even when there is no cost of visiting firms, they estimate the equilibrium buy-now discount for 12\% of the price for returning customers.

The difference between their and our findings comes from the assumption that their buyer randomly chooses the seller for her first visit, while our buyer first learns the pricing policies the sellers commit to, and then chooses which seller to visit first.
\footnote{More precisely, \cite{armstrong2016search} assume that the buyer learns a seller's pricing policy only when she visits the sellers. During this visit, the
seller also commits to the price upon return. We emphasize, however, that the key different component of our model is the sellers' commitment to (not only the observability of) the pricing policies
before the buyer decides which seller to visit first.
\par
In the \cite{armstrong2016search} setting, the buyer correctly anticipates the sellers' prices in equilibrium, which makes no difference compared to physically observing them. However, no seller can effectively deviate in their setting even by posting different prices when the seller cannot credibly commit to these different posted prices, because the buyer knows that once she visits the seller it will be in the seller's interest to charge the equilibrium prices. Therefore, the seller who cannot commit to pricing policy in advance cannot affect the buyer's decision which seller to visit first.}
Since \cite{armstrong2016search} show that buy-now discounts reduce consumer surplus and total welfare, our results demonstrate that commitment is desirable from the welfare perspective. The same holds for building reputation. Therefore any policy that facilitates commitment or building reputation, e.g., by promoting transparency of pricing strategies, will enhance welfare.

Our results are easy to understand when there are no exogenous search costs. If sellers offer buy-now discounts, then each of them has an incentive to commit to a discount also upon return to attract buyers to visit its location first. Indeed, the buyer prefers to visit the seller who offers a discount also upon return in order not to lose the discount from the seller who offers it only during the first visit. This competition for being visited first makes the buy-now discount ineffective, and guarantees the existence of symmetric equilibria in which firms post uniform prices. However, this argument does not give the uniqueness of such equilibria, because for some pairs of prices, each seller prefers to be visited second. We have been able to exclude symmetric equilibria in which sellers conduct price discrimination across visits only under zero search cost and the uniform distribution of valuations, 
because in these cases each seller prefers to be visited first for all pairs of prices. 

If search costs are positive, under the uniform distribution of valuations, sellers have an additional incentive for reducing the price gap in order to encourage customers to visit themselves first. And for small costs, by upper hemi-continuity of equilibria, the discrimination must be small, if any. Our numerical analysis suggests that (for small costs) there exist equilibria in which firms do not discriminate.

The rest of the paper is organized as follows. Section \ref{s11} contains a short literature review. In Section \ref{s2}, we introduce the model, with Section \ref{s21} devoted to a discussion of the commitment assumption and reputation. Section \ref{s3} contains a heuristic explanation of contrasting results in the models with and without commitment. In Section \ref{s4}, we present our results when the search cost is zero. In Section \ref{s5}, we discuss the case with positive search cost; in particular, Section \ref{s53} contains a summary of our numerical findings. The derivations and code used in the numerical analysis are provided in an appendix (to be posted online). We conclude in Section \ref{s6}.

\subsection{Literature review}\label{s11}

Our paper contributes to the literature on search deterrence. The most closely related work is \cite{armstrong2016search}. In that paper, they study a random search model and discuss sales techniques that discourage consumer search, such as making it more expensive for returning visits. They find that in the symmetric duopoly setting where the buyer's valuations of the firms' products are uniformly distributed, firms can benefit from price discrimination of returning customers. In our model, we still allow for the same deterrence strategy, but we assume the buyer can choose which firm to visit first after observing costlessly the prices to which firms commit. This gives us qualitatively different results from \cite{armstrong2016search}. \cite{armstrong2016search} have objectives and results that go beyond our more narrow focus on buy-now discounts. For example, they also study more general selling mechanisms beyond simple pricing policies. \cite{zhu2012finding} studies, in the context of financial markets, a closely related motive for price discrimination of returning customers, namely, a party that returns signals that it has checked other options and it has not found them attractive. This induces an offer that is less attractive than the initial one. In the uniform valuation setting of \cite{armstrong2016search}, \cite{groh2022search} study yet another reason for price discrimination. Each firm can reveal to its opponent that it has been visited by a buyer. The opponent can offer prices based on buyer's search history. They show that firms only have incentives to disclose search history when search costs are low. In \cite{zhu2012finding} and \cite{groh2022search}, as opposed to our paper,  sellers cannot commit to their pricing strategies.


A handful of papers (see, e.g., \cite{ellison2012search}, \cite{gamp2016guided} and \cite{petrikaite2018consumer}) study search deterrence in the context of obfuscation. In such models, aside from prices, firms also set search costs. In \cite{ellison2012search}, firms have incentives to raise the time of learning their prices when the search cost is a convex function of the total time of searching. In \cite{gamp2016guided} and \cite{petrikaite2018consumer}, a monopolist selling multiple products can increase its profit by influencing consumers' order of search through varying the costs of learning the values of its products.

Our work is embedded in the literature on price-directed search. Such models, building on \cite{weitzman1979optimal}, are studied in \cite{armstrong2017ordered} and \cite{choi2018consumer}. Other related papers study consumer search guided by the informativeness of signals provided by firms. \cite{au2023attraction}, abstracting away from price competition, show that firms provide the first-best level of information only if the quality of their products is sufficiently high or the competition is sufficiently fierce. \cite{au2024attraction} show that when both posted prices and information policies can guide consumer search, the equilibrium exhibits price dispersion and active search. Our paper instead focuses on how commitment to pricing policies that guide consumer search renders search deterrence unprofitable in a duopoly setting. Some papers study pricing games between sellers with exogenous search orders determined by location or prominence, for example, \cite{arbatskaya2007ordered}, \cite{armstrong2009prominence} and \cite{zhou2011ordered}, while in our paper, the search order is endogenously determined by firms' pricing strategies.

Finally, our paper is also related, albeit more distantly, to the large literature on intertemporal price discrimination outside search models, especially the work on advance-purchase discounts when buyer uncertainty is resolved over time (e.g., \cite{nocke2011advance}), and the discussion of the optimality of such discounts (e.g. \cite{gale1992efficiency}, \cite{dana2001monopoly}) versus uniform pricing (e.g., \cite{riley1983optimal}; \cite{wilson1988optimal}).

\section{Model}\label{s2}

Two symmetric firms (sellers) $i=1,2$ each have one unit of a product for sale to a buyer (she), who is interested in purchasing only one unit from either seller. The buyer's valuation of the unit that seller $i=1,2$ has is $v_{i}$. She initially does not know any of her values $v_{i}$ and only knows that $v_{i}$ is distributed according to $F$, which is supported on \([0,1]\). The buyer can learn the exact value $v_{i}$ by visiting firm $i$. For most of the paper, we assume that the buyer incurs no cost of visiting a firm; in Section \ref{s5}, we explore the case when this cost is positive.

The timing of the game is as follows: Each seller simultaneously posts prices $\underline{p}_{i}\leq \overline{p}_{i}$ for its good. The lower price stands for the amount the buyer has to pay if she decides to acquire the product from seller $i$ during her first visit at that seller's location; the higher price represents the amount that she has to pay if she leaves the seller's location and acquires the product upon return.\footnote{We assume upfront that $\underline{p}_{i}\leq  \overline{p}_{i}$\, because if $\underline{p}_{i}>\overline{p}_{i}$, the buyer could always reject buying during the first visit, and return (after visiting the other seller or not) to buy at a lower price.} That is, the seller can record the buyer's visits at its location, or at least whether she visits for the first time or returns. In contrast, the seller cannot observe whether the buyer has visited the other seller. Sellers are fully committed to the prices they post.

Next, the buyer decides which seller to visit first. During this visit, the buyer learns $v_{i}$, and she can choose to buy the product from seller $i$ by paying $\underline{p}_{i}$. If she does, this ends the game. The buyer can also choose to leave seller $i$'s location and visit the other seller $j$ to learn $v_{j}$. The buyer can now buy the product from seller $j$ at $\underline{p}_{j}$, return to seller $i$ and buy its product at $\overline{p}_{i}$, or leave the market without purchasing either product. Each of these decisions ends the game. At any moment, the buyer can stop searching. Returning to seller $i$ incurs no additional cost. This last assumption is consistent with prior literature. Similar to \cite{armstrong2016search}, we acknowledge that typically buyers face some (usually much smaller) cost of returning, but this cost is unlikely to change any our conclusions, and would only blur the comparison with \cite{armstrong2016search}.

We normalize the production cost to zero and a seller's payoff is the price that the buyer pays if she decides to buy from this firm.  The buyer's payoff is equal to the value of the acquired product minus the price paid for it and the search cost incurred, which is zero except in Section \ref{s5}, or zero if no product is acquired.

\subsection{Discussion of the commitment assumption}\label{s21}

The difference between \cite{armstrong2016search} and our findings comes from the assumption that their buyer chooses randomly which seller to visit first (or, as they put, the buyer learns the pricing policy only after visiting a seller), while our buyer first learns the pricing policies sellers commit to, and then chooses which seller to visit first. So, commitment is the key assumption for our analysis.

Commitment to uniform pricing is common in stores, but commitment may also be a feature of pricing policies in the long run, even when there is no explicit available instrument that would allow firms to commit. Indeed, if the model is interpreted literally, the interaction between the buyer and the seller lasts for only one period. In this case, sellers have an incentive to offer buy-now discounts to deter search, and \cite{armstrong2016search} show that they indeed do so in equilibrium. The buyer correctly anticipates that she will be offered a discount, but it has no impact on the decision over which seller to visit first, because both sellers will do the same.

However, if the model is interpreted as a shorthand (or a reduced version) of a repeated interaction of the same sellers possibly with different buyers (a more realistic scenario in numerous settings), then buyers learn from the past pricing of the sellers. Intuitively, if a seller has been offering a discount for the first visit, the buyer who comes next expects that such a discount will be offered during their visit. If a seller typically charges a uniform price, the buyer expects no discount to be offered. The information regarding sellers' pricing policies usually quickly spreads out by word of mouth. And the buyer chooses which seller to visit first based on this expectation. Of course, the seller can surprise one or two buyers, but if it happens on a more regular basis, buyers will begin expecting the different policy.

More formally, a large literature on reputation shows that a long-lived seller can achieve a commitment outcome when interacting with a sequence of short-lived buyers. In the duopoly setting, we will show that at least for zero search cost and the uniform distribution of valuations, each seller benefits from committing to a uniform pricing policy in response to any discriminatory pricing strategy of the opponent. So, the reputation literature suggests that the uniform pricing strategy should be a unique equilibrium of the game between two long-lived sellers and a sequence of short-lived buyers.

Of course, each seller also has an incentive to use any available instrument that allows itself to commit to  uniform pricing in response to any pricing strategy of the opponent. So, if we explore the game in which the sellers first choose whether to commit to their pricing strategy, this game would have a unique equilibrium in which the sellers choose to commit to a uniform pricing (at least for zero cost and uniform distribution of valuations). 

\section{Example (bonus games)}\label{s3}

In this section, we attempt to explain in the simplest possible manner why discriminatory pricing is an equilibrium phenomenon when firms cannot commit to their pricing strategies, but uniform pricing is an equilibrium when firms can commit to their pricing strategies.  
To do so, we suggest a somewhat heuristic analysis that intuitively captures the difference between the two scenarios.

\subsection{Posting single prices}\label{s31}

Suppose first that the selling policies are restricted to posting a single price. This case is equivalent to the model of \cite{perloff1985equilibrium} with two firms. Then, the game with the uniform distribution of valuations has a unique symmetric equilibrium in which each firm posts $p^*=\sqrt[2]{2}-1$.

Take any price \(0<p<1\). Indeed, since there is no cost of visiting a firm, the buyer visits both firms and buys from the one in which her valuation is greater. Therefore each firm sells with probability $(1-p^{2})/2$. For $p$ to be an equilibrium price, it must be that no firm benefits from lowering or raising its price by $dp$. Suppose a firm lowers its price by $dp$ (or raises its price if $dp$ is negative).\footnote{For simplicity, we analyze in this example only the first-order effects.} Then the firm loses $(1-p^{2})dp/2$ by selling at a lower price to the buyer who would buy from the firm at $p$. However, it gains by: (1) selling to the buyer who would not buy at price $p$, and (2) ``stealing'' the buyer from the other firm.

The gain from (1) is $pdp$. The increase in demand from (2) is equal to the area of valuation pairs such that the valuation of the firm's product is greater than $p$, and the valuation of the other firm's product is the greater of the two but not by more than $dp$. The area of such valuation pairs in the unit square is equal to $(1-p)dp$. So, the gain from (2) is \(p(1-p)dp\). So in an equilibrium, it must be that 
\begin{equation*}
\frac{(1-p^{2})dp}{2}=p^{2}dp+p(1-p)dp.
\end{equation*}
This yields a quadratic equation for $p$; by solving it, we obtain
\begin{equation*}
p^*=\sqrt[2]{2}-1.
\end{equation*}

\subsection{First-visit bonuses}

Consider now the situation in which firms are allowed to conduct price discrimination, but only by offering small bonuses to first comers. More specifically, each firm can either offer the same price $p^*$ for all customers, or it can offer price $p^*-\Delta p$ for first comers and $p^*$ for returning customers. Do firms have an incentive to deter search during the first visit by offering the lower price? The answer is positive when firms cannot commit to the pricing policies, as stated in the following proposition:

\begin{proposition}[Firms cannot commit to pricing policies ex ante.]\label{p1}
Suppose that one of the two firms, say firm 1, offers price $p^*$ for all customers. Then, for any sufficiently small bonus $\Delta p$, the payoff of firm 2 is greater when it conducts price discrimination than when it offers $p^*$ for all customers.
\end{proposition}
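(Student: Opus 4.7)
The plan is to compute firm 2's expected profit from the deviation and compare it to the equilibrium profit $\pi^\ast = p^\ast(1-(p^\ast)^2)/2$. Because firms cannot commit in this subsection, the buyer cannot condition her first visit on the realized prices; by symmetry she visits each firm first with probability $1/2$, and this remains true off path because firm 2's deviation is unobservable before a visit. I would then split firm 2's deviation revenue according to the visit order and show that the gain from being visited first dominates the loss from being visited second at first order in $\Delta p$.

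\textbf{Case 1: firm 2 visited first.} The buyer, having learned $v_2$, compares buying now at $p^\ast-\Delta p$ (surplus $v_2-p^\ast+\Delta p$) with continuing, whose expected value under uniform $F$ equals $(1+v_2^2)/2-p^\ast$ for $v_2\geq p^\ast$ and the constant $(1-p^\ast)^2/2$ for $v_2<p^\ast$. Solving $v_2-p^\ast+\Delta p=(1+v_2^2)/2-p^\ast$ yields the buy-now cutoff $v^\ast=1-\sqrt{2\Delta p}$: the buyer purchases immediately iff $v_2\geq v^\ast$; otherwise she continues and returns to firm 2 only when $v_2>\max(v_1,p^\ast)$. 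Integrating, firm 2's revenue is $(p^\ast-\Delta p)(1-v^\ast)+p^\ast[(v^\ast)^2-(p^\ast)^2]/2$; subtracting $\pi^\ast$ and using $(1-v^\ast)^2=2\Delta p$ gives the net change
\[
(1-v^\ast)\!\left[\tfrac{p^\ast(1-v^\ast)}{2}-\Delta p\right]=p^\ast\Delta p-\sqrt{2}\,(\Delta p)^{3/2}.
\]

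\textbf{Case 2: firm 2 visited second.} Since search is free and the buyer can always return to firm 1 at $p^\ast$, continuing weakly dominates buying immediately from firm 1, so she learns both values and then picks the highest-surplus option. Firm 2 sells at $p^\ast-\Delta p$ iff $v_2+\Delta p>v_1$ and $v_2>p^\ast-\Delta p$, which under $U[0,1]$ has probability $(1-(p^\ast)^2)/2+\Delta p$. Hence the revenue is $(p^\ast-\Delta p)\bigl[(1-(p^\ast)^2)/2+\Delta p\bigr]$. Using $p^\ast=\sqrt{2}-1$, which gives the identity $(1-(p^\ast)^2)/2=p^\ast$, this collapses to $(p^\ast)^2-(\Delta p)^2$, so the change relative to $\pi^\ast$ is $-(\Delta p)^2$. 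The vanishing first-order term here is exactly the Perloff envelope condition at $p^\ast$: a small symmetric undercut of the first-visit price against a rival posting $p^\ast$ has zero first-order effect.

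Combining the two subcases with weights $1/2$,
\[
\pi_2^{\mathrm{dev}}-\pi^\ast=\tfrac{1}{2}\bigl[p^\ast\Delta p-\sqrt{2}(\Delta p)^{3/2}\bigr]+\tfrac{1}{2}\bigl[-(\Delta p)^2\bigr]=\tfrac{p^\ast}{2}\Delta p+o(\Delta p),
\]
which is strictly positive for all sufficiently small $\Delta p>0$. The main obstacle I anticipate is pinning down the cutoff $v^\ast=1-\sqrt{2\Delta p}$ in Case 1 and recognizing the right scaling: although $v^\ast$ moves by $O(\sqrt{\Delta p})$, the two $O(\sqrt{\Delta p})$ contributions — gaining the discounted price on marginal buyers and losing the original $p^\ast$ on those same buyers — cancel, and only the net $O(\Delta p)$ surplus extraction survives. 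Everything else is routine integration under the uniform distribution.
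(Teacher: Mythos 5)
Your proof is correct and follows essentially the same route as the paper: condition on the visit order with equal weights $1/2$, show a first-order gain $p^\ast\Delta p$ from deterred search when firm 2 is visited first (via the cutoff $v^\ast=1-\sqrt{2\Delta p}$), and a lower-order loss when visited second (the envelope property at the Perloff--Salop price). Your explicit uniform-distribution computations ($-\sqrt{2}(\Delta p)^{3/2}$ and $-(\Delta p)^2$ for the correction terms) simply sharpen the paper's order-of-magnitude area arguments in Claims \ref{c1} and \ref{c2}.
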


Indeed, if firm 2 is visited first, then it makes a first-order gain (in $\Delta p$) when it engages in price discrimination compared to offering a uniform
price (Claim \ref{c1}). If firm 2 is visited second, then it generates a lower-order loss (Claim \ref{c2}).

\begin{claim}\label{c1}
If firm 2 is visited first, then the discriminatory pricing  results in a gain of order $\Delta p$ to firm 2 compared to uniform pricing.
\end{claim}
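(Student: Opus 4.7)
The plan is to compute firm 2's expected revenue conditional on being visited first under each pricing scheme and show that the difference is of order $\Delta p$ with a positive leading coefficient. The first step is to characterize the buyer's decision upon observing $v_{2}$ at firm 2: she compares the immediate payoff $v_{2}-p^{*}+\Delta p$ with the continuation payoff $E_{v_{1}}[\max(v_{1}-p^{*},\,v_{2}-p^{*},\,0)]$, since continuing to firm 1 is free and she retains the return option at $p^{*}$. Writing the option value of continuation as
\[
E(v_{2}) := E_{v_{1}}\bigl[\max(v_{1}-p^{*},\,v_{2}-p^{*},\,0)\bigr] - (v_{2}-p^{*}),
\]
a direct calculation under the uniform distribution yields $E(v_{2})=(1-v_{2})^{2}/2$ for $v_{2}\geq p^{*}$ and $E(v_{2})=(1+p^{*2})/2-v_{2}$ for $v_{2}<p^{*}$, so $E$ is continuous and strictly decreasing. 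The buyer buys immediately iff $E(v_{2})\leq\Delta p$, which for $\Delta p<(1-p^{*})^{2}/2$ pins the threshold in the first branch: $v^{*}=1-\sqrt{2\Delta p}>p^{*}$.

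Next I would assemble firm 2's expected revenue conditional on being visited first. Under discrimination firm 2 earns $p^{*}-\Delta p$ on $\{v_{2}\geq v^{*}\}$ (probability $\sqrt{2\Delta p}$) and $p^{*}$ on the return event $\{p^{*}\leq v_{2}<v^{*},\,v_{1}\leq v_{2}\}$ (probability $((v^{*})^{2}-p^{*2})/2$). Under the uniform benchmark the buyer visits both firms (as in Section \ref{s31}) and firm 2 sells at $p^{*}$ iff $v_{2}\geq\max(v_{1},p^{*})$, with probability $(1-p^{*2})/2$. Expanding $(1-\sqrt{2\Delta p})^{2}=1-2\sqrt{2\Delta p}+2\Delta p$ and cancelling, the difference should collapse to
\[
R_{\mathrm{disc}}-R_{\mathrm{uni}}=\Delta p\,\bigl(p^{*}-\sqrt{2\Delta p}\bigr),
\]
which is strictly positive and of order $\Delta p$ for all sufficiently small $\Delta p$.

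The economic content is transparent: the discount steals buyers on the rectangle $\{v_{2}\geq v^{*},\,v_{1}>v_{2}\}$, which has area $\Theta(\Delta p)$ and contributes revenue $\Theta(1)$ per sale, while the margin sacrifice on inframarginal buyers $\{v_{2}\geq v^{*},\,v_{1}\leq v_{2}\}$ applies on a set of probability $\Theta(\sqrt{\Delta p})$ but costs only $\Delta p$ per sale, contributing $\Theta(\Delta p^{3/2})$, which is dominated. The main obstacle is the characterization of $E(v_{2})$: one must verify both that it takes the clean quadratic form $(1-v_{2})^{2}/2$ throughout the region where $v^{*}$ falls, and that the buyer strictly prefers to continue rather than leave whenever she does not buy immediately. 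These are routine case distinctions on $v_{2}\lessgtr p^{*}$, but they are the only place where the zero-search-cost assumption is used nontrivially; once they are in hand, the remaining arithmetic reduces to integrating the uniform density.
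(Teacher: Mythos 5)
Your proposal is correct and follows essentially the same route as the paper: the indifference condition $(1-v)^{2}/2=\Delta p$ pins the cutoff $v^{*}=1-\sqrt{2\Delta p}$, and the net effect decomposes into an order-$\Delta p$ gain from buyers whose search is deterred (your exact expression $\Delta p\,(p^{*}-\sqrt{2\Delta p})$ matches the paper's ``green area times price minus red area times $\Delta p$'' accounting). The only difference is cosmetic: you carry out the exact revenue computation for the uniform distribution, whereas the paper argues at the level of orders of magnitude and notes that the argument extends to any $F$ with density bounded and bounded away from zero.
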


\noindent\textbf{Proof}: By offering the bonus, firm 2 sells to a buyer who could have obtained a higher surplus from firm 1's product but decides not to visit firm 1 in order not to lose the bonus. That is, the bonus deters some search. The pairs of valuations for which this happens are depicted in a green contour in Figure \ref{fig1}, where $v$ is the valuation at which the buyer is indifferent between buying from firm 2 on the first visit (and getting the bonus) and visiting firm 1 before making the decision from which firm to purchase.

\begin{figure}
    \centering
    \includegraphics[width=0.5\linewidth]{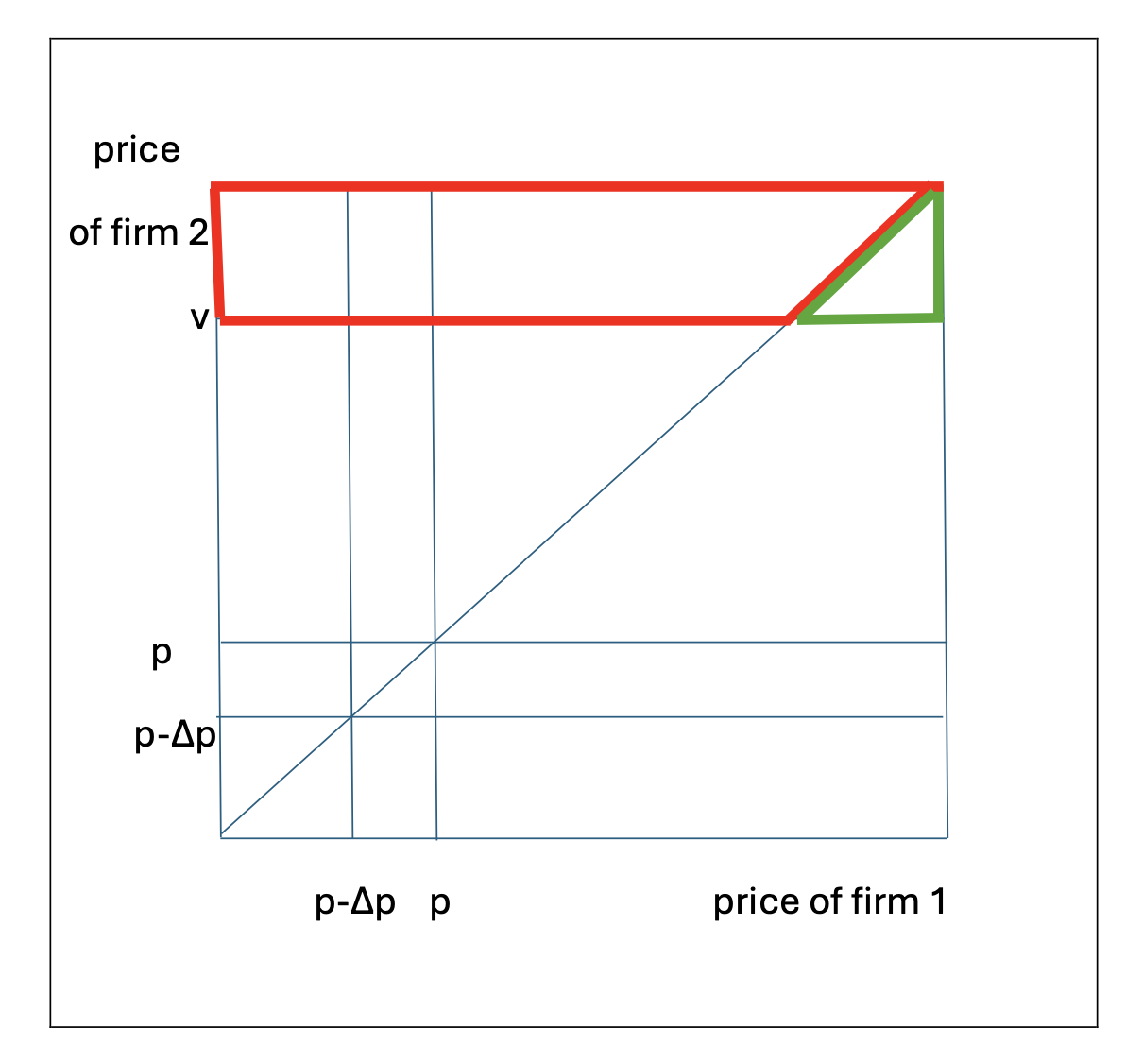}
    \caption{The buyer who visits firm 2 first and is offered a bonus will buy immediately if her valuation is above $v$ and otherwise visit firm 1 and buy the product she values more (provided that at least one valuation is higher than price $p$). If both firms always sell at price $p$, then the buyer visits both firms and buys from the one whose product she values more (provided that at least one valuation is higher than price $p$)}
    \label{fig1}
\end{figure}

So, the gain from deterring search is of the order of the probability of the green area. When $F$ has a bounded density  which is also bounded away from 0 on $[0,1]$, this probability is of the same order as what the buyer who values firm 2's product at $v$ expects to gain from visiting firm 1, which in turn must be equal to the loss of the bonus. Thus, firm 2's gain from deterring search is of order $\Delta p$. For uniform distribution of valuations, the buyer who values firm 2's product at $v$ expects to gain by visiting firm 1
\begin{equation*}
\int_{v}^{1}(x-v)dx=\frac{1}{2}(1-v)^{2},
\end{equation*}
and this is also the green area and probability thereof.

Firm 2 loses from offering the bonus by charging $\Delta p$ less the buyer who would buy from firm 2 under uniform pricing. The pairs of valuations for which this happens are depicted in Figure \ref{fig1} in a red contour. However, this loss is of order lower than $\Delta p$, because $v$ tends to 1 when $\Delta p$ tends to 0, so the probability of the red area vanishes with $\Delta p$, and the expected loss is equal to the probability of the red area multiplied by $\Delta p$. This completes the proof of the claim. \qed

\begin{claim}\label{c2}
If firm 2 is visited second, then the discriminatory pricing results in a loss of order lower than $\Delta p$ compared to uniform pricing.
\end{claim}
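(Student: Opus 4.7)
The plan is to compute firm 2's conditional revenue in both scenarios and then invoke the first-order condition defining $p^{*}=\sqrt{2}-1$ from Section \ref{s31}, which will make the first-order terms in $\Delta p$ cancel. First, I would pin down buyer behavior. Conditional on firm 2 being visited second under either policy, the buyer visits firm 1 at price $p^{*}$, learns $v_{1}$, and then continues to firm 2: search is costless and firm 1's return price equals its first-visit price $p^{*}$, so continuation is weakly dominant. After learning $v_{2}$ she picks the best of $v_{1}-p^{*}$, $v_{2}-q$, and $0$, where $q$ is firm 2's first-visit price ($q=p^{*}$ under uniform pricing, $q=p^{*}-\Delta p$ under discrimination).

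Second, I would compute the revenue in each case. Under uniform pricing, firm 2 sells at $p^{*}$ whenever $v_{2}\geq v_{1}$ and $v_{2}\geq p^{*}$, giving revenue $p^{*}(1-(p^{*})^{2})/2$. Under the bonus, firm 2 sells at $p^{*}-\Delta p$ whenever $v_{2}+\Delta p\geq v_{1}$ and $v_{2}\geq p^{*}-\Delta p$; the substitution $u=v_{2}+\Delta p$ shows this probability equals $(1-(p^{*})^{2})/2+\Delta p$, so the revenue is $(p^{*}-\Delta p)\bigl[(1-(p^{*})^{2})/2+\Delta p\bigr]$. Subtracting, the net change in firm 2's conditional payoff equals $\Delta p\,\bigl[p^{*}-(1-(p^{*})^{2})/2\bigr]-\Delta p^{2}$.

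Third, I would invoke the equilibrium condition derived in Section \ref{s31}, which is $(p^{*})^{2}+2p^{*}-1=0$, equivalently $p^{*}=(1-(p^{*})^{2})/2$. This kills the first-order term and leaves $-\Delta p^{2}$, a loss of order strictly lower than $\Delta p$, as claimed. Equivalently, this is an envelope observation: firm 2's problem conditional on being visited second is exactly the Perloff-type game of Section \ref{s31}, and $p^{*}$ is the unique best response to $p^{*}$ in that game, so the derivative of firm 2's revenue with respect to its own price vanishes at $p^{*}$ and second-order concavity turns the change into a loss.

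The only conceptual step that needs care is the reduction in the first paragraph---verifying that with zero search cost the buyer indeed visits both firms in both scenarios and that her final purchase choice depends only on $(v_{1},v_{2})$ together with the two relevant posted prices. Once this observation is in place, the cancellation of the first-order term is delivered for free by the equilibrium condition on $p^{*}$ and no further analysis is required.
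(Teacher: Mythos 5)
Your argument is correct, and its logical skeleton matches the paper's: conditional on being visited second, only firm 2's first-visit price matters, so the problem collapses to the single-price (Perloff--Salop) game of Section \ref{s31}, and the first-order term in $\Delta p$ dies because $p^{*}$ satisfies that game's equilibrium condition. Where you differ is in how you establish that the \emph{conditional-on-second-visit} derivative vanishes. The paper argues this abstractly: the total loss from undercutting in the auxiliary single-price game is of order lower than $\Delta p$ by the FOC, and it splits into a first-visit and a second-visit component which are \emph{exactly equal} (since with $c=0$ the buyer visits both firms and her purchase depends only on $(v_1,v_2)$ and the prices), so each component is of lower order. You instead compute the second-visit revenue directly for the uniform distribution and verify that the linear term cancels via $p^{*}=(1-(p^{*})^{2})/2$, obtaining the exact loss $-\Delta p^{2}$. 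Your computation is correct and buys a sharper conclusion (the loss is exactly second order), but it is tied to the uniform case, whereas the paper's symmetry argument works for any $F$ for which the single-price game has a well-behaved symmetric equilibrium. One small caveat: your closing ``envelope observation'' as stated is slightly too quick --- the FOC at $p^{*}$ is a condition on \emph{total} profit, and it transfers to the conditional profit only because the two conditional components coincide when search is costless; your first paragraph's reduction supplies exactly this, so the gap is cosmetic, but it is worth making the equality of the two components explicit rather than leaning on ``$p^{*}$ is a best response.''
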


\noindent\textbf{Proof:} As an auxiliary step, return to the setting in which firms are not allowed to offer different prices. Then, charging price $p-\Delta p$ against price $p$ results in a loss, because $p$ is a symmetric equilibrium price. However, this loss is of a lower order than $\Delta p$. In this auxiliary setting, the loss is the sum of two components: the effect on firm 2's payoff when it is visited first and the effect on its payoff when it is visited second. Note that these two effects are exactly equal, because in any case the buyer visits both firms and the decision over which firm's product to purchase does not depend on which firm is visited first.

Consider now that firms are allowed to conduct price discrimination. If firm 2 is visited second, then only its price for first comers matters, and the effect of firm 2's price for first comers on its payoff is exactly the same as that in the auxiliary scenario when firm 2 is visited as second. This last effect is a loss of order lower than $\Delta p$, which completes the proof of the claim. \qed

If firms cannot commit to their pricing policies, the probability that the buyer visits each firm first is \(\frac{1}{2}\). This, together with Claims \ref{c1} and \ref{c2}, completes the proof of Proposition \ref{p1}.

Suppose now that firms first commit to their pricing policies. Then, the answer to the question asked at the beginning of this section is negative.

\begin{proposition}[Firms commit to pricing policies ex ante.]\label{p2}
If no bonus is offered by firm 1, then offering a bonus by firm 2 reduces the profit of firm 2 compared to uniform pricing, 
\end{proposition}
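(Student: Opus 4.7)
The plan is to prove Proposition~\ref{p2} in two steps: first, show that when firm 1 commits to $p^*$ uniformly and firm 2 commits to $(\underline{p}_2, \overline{p}_2) = (p^*-\Delta p,\, p^*)$, the buyer strictly prefers to visit firm 1 first; second, given this visit order, show that firm 2's profit is strictly smaller than under the no-bonus benchmark $(p^*, p^*)$. The central intuition is that visiting firm 1 first lets the buyer simultaneously retain access to firm 1 at $p^*$ (through a costless return) and to firm 2 at the bonus price (still a first visit); visiting firm 2 first forces her to trade off the bonus against the option of learning $v_1$.

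For the first step, I would compare the buyer's ex ante expected payoff under the two orders. Under order A (firm 1 first), because firm 1 charges the same price on return, the buyer always proceeds to firm 2 and ex post picks the best of $\{v_1 - p^*,\, v_2 - (p^*-\Delta p),\, 0\}$. Under order B (firm 2 first), she must commit to her search decision on the basis of $v_2$ alone; if she searches, the bonus at firm 2 is lost and she picks from $\{v_1 - p^*,\, v_2 - p^*,\, 0\}$, while if she does not, she collects $v_2 - (p^*-\Delta p)$. Since A's payoff is the pointwise ex post optimum while B interchanges an expectation with a $\max$ over a smaller option set, a Jensen-type argument delivers weak dominance; strictness follows from the positive-probability event on which A's best option $v_2 - (p^*-\Delta p)$ strictly exceeds whatever B can deliver (either because B committed not to search and missed $\{v_1 > v_2 + \Delta p\}$, or because B searched and thereby forfeited the bonus).

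For the second step, given the buyer visits firm 1 first and always continues to firm 2, firm 2's return price $\overline{p}_2 = p^*$ is never exercised on path. Firm 2 sells at $p^* - \Delta p$ exactly on the event $\{v_2 - (p^*-\Delta p) > \max(v_1-p^*,\, 0)\}$. But this is the same outcome firm 2 would produce by committing to the uniform price $p^*-\Delta p$ in the single-price game of Section~\ref{s31} against firm 1's $p^*$. Since $p^*$ is the strict best response to $p^*$ in that game (by the first-order condition derived there), a unilateral deviation to $p^*-\Delta p$ strictly reduces profit. For the uniform distribution the envelope calculation pins the loss down at order $(\Delta p)^2$.

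The hard part is the first step. Establishing \emph{strict} rather than weak preference for order A is essential, because any positive probability on order B would reintroduce the profitable search-deterrence channel identified in Claim~\ref{c1}, undoing the clean reduction to the single-price game that drives the second step. The strictness ultimately rests on the asymmetry created by firm 2's commitment to a higher return price: it is precisely this commitment that makes the buyer prefer to keep firm 2 ``in reserve'' as a last stop rather than as her first.
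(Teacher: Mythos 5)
Your proposal is correct and follows essentially the same route as the paper: the buyer visits firm 1 first, so the bonus deters no search, and firm 2's outcome reduces to a unilateral single-price deviation to $p^*-\Delta p$ against $p^*$, which is a second-order loss. The only difference is that you explicitly prove the visit-order claim (via the pointwise-max/Jensen comparison of the two search orders), which the paper merely asserts; that added rigor is welcome but does not change the argument.
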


Indeed, if the bonus is offered, the buyer first visits the other firm, therefore it does not deter search. And its effect is the same as in the auxiliary setting in which firms are not allowed to discriminate, conditional on firm 2 being visited as second. However, we have already established that this results in a loss (of an order lower than $\Delta p$) compared to uniform pricing. This yields Proposition \ref{p2}.

\bigskip
\noindent\textbf{Remark 1.} One can show that Propositions \ref{p1} and \ref{p2} remain true even when firm 1 conducts price discrimination. So, the unique symmetric equilibrium of the game in which each firm can only charge $p$ or $p-\Delta p $ has: (i) both firms engage in price discrimination if firms cannot commit, and (ii) no firm conduct price discrimination when firms commit to their pricing policies.

\bigskip
Of course, the analysis of this example is heuristic. It assumes an exogenous price $p$ and an exogenous small bonus $\Delta p$.  However, \cite{armstrong2016search} show that when firms can choose any prices they want and the buyer's decision about which firm to visit first is exogenous (because the buyer learns a firm's pricing policy only when she visits the firm), there is a symmetric equilibrium in which firms offer a lower price on the first visit and a higher price upon return. And we show that when firms can commit upfront to any price pairs they want, in the \textit{unique symmetric equilibrium}, each firm charges the same price $p=\sqrt[2]{2}-1$ no matter whether this is the first visit or return.

\section{Results for \(c=0\)}\label{s4}

This section contains our main results when the cost of visiting any seller is negligible. Let $p^{\ast }$ be the price in a symmetric equilibrium of the game in which firms are allowed to post only a single price for both first comers and returning customers.

Consider now the game in which firms are allowed to post any, possibly different prices, that is, the strategies are pairs of prices such that $\underline{p}_{i}\leq \overline{p}_{i}$; of course, prices can be different for different (though symmetric) firms.

The prices \underline{$p$}$_{i}=\overline{p}_{i}=$\underline{$p$}$_{-i}=\overline{p}_{-i}=p^{\ast }$ for $i=1,2$ are still equilibrium prices. Indeed, no firm $i$ can profitably deviate to \underline{$p$}$_{i}^{\prime }<\overline{p}_{i}^{\prime }$. Observing such a deviation, the buyer would first visit the other firm, and only the lower price \underline{$p$}$_{i}^{\prime }$ of the deviating firm would ever matter. So, only deviations to a single price, i.e., to \underline{$p$}$_{i}^{\prime }=\overline{p}_{i}^{\prime }\neq p^{\ast }$ have a chance to be profitable. However, because \underline{$p$}$_{i}=\overline{p}_{i}=p^{\ast }$ for $i=1,2$ is an equilibrium of the game in which each firm is allowed to post only a single price, such deviations are not profitable either. This yields the following result:

\begin{proposition}\label{p3}
Each firm posting a single price \underline{$p$}$_{i}=\overline{p}_{i}=p^{\ast }$  is an equilibrium of the game in which firms post pairs of prices $\underline{p}_{i}\leq \overline{p}_{i}$.
\end{proposition}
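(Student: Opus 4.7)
My plan is to verify that the symmetric profile in which every firm posts $\underline{p}_i = \overline{p}_i = p^*$ admits no profitable unilateral deviation. Fix firm $1$'s strategy at $(p^*, p^*)$ and consider an arbitrary deviation by firm $2$ to a pair $(\underline{p}_2', \overline{p}_2')$ with $\underline{p}_2' \leq \overline{p}_2'$. I split the deviations into two classes: (i) strictly discriminatory deviations with $\underline{p}_2' < \overline{p}_2'$, and (ii) uniform deviations with $\underline{p}_2' = \overline{p}_2' \neq p^*$.

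The key step for class (i) is to show that the buyer weakly prefers to visit firm $1$ first. For any candidate strategy $\sigma_A$ that starts at firm $2$, the buyer's realized payoff is one of: $v_2 - \underline{p}_2'$ (buy on the first visit), $v_1 - p^*$, $v_2 - \overline{p}_2'$ (return), or $0$; hence $\sigma_A$'s payoff is bounded above, state by state, by $\max\{v_1 - p^*,\, v_2 - \underline{p}_2',\, 0\}$, using $\underline{p}_2' \leq \overline{p}_2'$. But this upper bound is precisely the realized payoff of the alternative strategy $\sigma_B$ that visits firm $1$ first, proceeds to firm $2$ at zero cost, and then picks the best of $\{v_1 - p^*,\, v_2 - \underline{p}_2',\, 0\}$. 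So $\sigma_B$ weakly dominates $\sigma_A$ state by state, and visiting firm $1$ first is optimal. Under this order, firm $2$ is always reached second, and its return price $\overline{p}_2'$ never binds: a buyer standing at firm $2$'s counter with both valuations known will never leave and then return to pay $\overline{p}_2' > \underline{p}_2'$ when she could have bought immediately at $\underline{p}_2'$. Consequently, firm $2$'s payoff under $(\underline{p}_2', \overline{p}_2')$ coincides with its payoff when posting the single price $\underline{p}_2'$, reducing class (i) to class (ii).

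For class (ii), the restricted interaction is the one-price game, of which $p^*$ is by construction a symmetric equilibrium; hence no uniform deviation is profitable. Combining the two cases verifies that $(p^*, p^*)$ is a best response to itself, and by symmetry the profile is an equilibrium, yielding Proposition~\ref{p3}. The main obstacle I foresee is the first step: formalizing that visiting firm $1$ first is weakly optimal for the buyer after an arbitrary deviation. The cleanest route is the state-by-state envelope comparison above, which relies on two frictions disappearing simultaneously — zero search cost kills any reason to commit early at firm $2$, while $\underline{p}_2' \leq \overline{p}_2'$ makes firm $2$'s continuation option strictly worse when firm $2$ is visited first. Once this reduction is in hand, the proposition follows essentially for free from the one-price benchmark.
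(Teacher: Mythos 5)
Your proposal is correct and follows essentially the same route as the paper: a discriminatory deviation $\underline{p}_2' < \overline{p}_2'$ induces the buyer to visit the non-deviating firm first, so only $\underline{p}_2'$ ever matters and the deviation reduces to a uniform one, which is unprofitable because $p^*$ is an equilibrium of the single-price game. Your state-by-state dominance argument for why the buyer (weakly) prefers to start at the uniform-pricing firm is simply a more explicit version of the step the paper asserts informally.
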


When the distribution of valuations is uniform, we know from Section \ref{s31} that the game in which firms are allowed to post only a single price has a unique symmetric equilibrium. Under the same distribution, this  equilibrium  remains the unique symmetric equilibrium of the game that allows for price discrimination. Thus, at least in this case, an equilibrium in which firms discriminate does not coexist with the equilibrium in which they charge the uniform price. This is in sharp contrast with \cite{armstrong2016search}, in whose model with no commitment firms discriminate in equilibrium even when \(c=0\). We formally state this result as follows:

\begin{proposition}\label{p4}
Under uniform distribution of valuations, \underline{$p$}$_{1}=\overline{p}_{1}=$\underline{$p$}$_{2}=\overline{p}_{2}=p^{\ast }=\sqrt[2]{2}-1$ is a unique symmetric equilibrium.
\end{proposition}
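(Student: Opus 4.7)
The plan is to leverage Proposition~\ref{p3}, which already delivers existence, and to rule out every other symmetric equilibrium candidate. Two families of candidates need elimination: (i) symmetric equilibria in which both firms post a common uniform price $p\neq p^{\ast}$, and (ii) symmetric discriminatory equilibria in which $\underline{p}<\overline{p}$.

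Case (i) is disposed of quickly. If both firms post a common uniform price $p$, the profitability of any uniform-price deviation $p'$ by firm~2 is determined exactly as in the single-price game of Section~\ref{s31}, since no player uses its discriminatory flexibility. So $p$ must be a symmetric equilibrium of that single-price game, which under the uniform distribution has the unique solution $p=p^{\ast}=\sqrt{2}-1$.

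For case (ii), the plan is a perturbation argument. Let $\pi_{1}(\underline{p},\overline{p})$ and $\pi_{2}(\underline{p},\overline{p})$ denote firm~2's payoff, given both firms post $(\underline{p},\overline{p})$, conditional on being visited first or second, respectively; by symmetry the buyer is indifferent about order and firm~2's equilibrium payoff is $\tfrac{1}{2}(\pi_{1}+\pi_{2})$. Consider the deviation in which firm~2 shaves its return price to $\overline{p}-\varepsilon$ for small $\varepsilon>0$ while keeping $\underline{p}$. The key observation is that, because $v_{1}$ and $v_{2}$ are i.i.d., the buyer's expected utility from a visit order is a function only of the first-visited firm's first-visit and return prices together with the other firm's first-visit price; the second-visited firm's return price never enters, because she does not return to the firm she visits second. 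After the deviation, the two orders coincide in every such relevant price except the return price of the firm visited first: this is $\overline{p}-\varepsilon$ in order ``firm~2 first'' and $\overline{p}$ in order ``firm~1 first''. Since expected utility is monotonically decreasing in that price, the buyer strictly prefers to visit firm~2 first. By continuity of firm~2's payoff in its own prices, the deviation payoff is $\pi_{1}(\underline{p},\overline{p}-\varepsilon)\to\pi_{1}(\underline{p},\overline{p})$ as $\varepsilon\downarrow 0$, so non-profitability of the deviation for every $\varepsilon>0$ forces $\pi_{1}(\underline{p},\overline{p})\le\pi_{2}(\underline{p},\overline{p})$.

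The main obstacle is then verifying under the uniform distribution the reverse strict inequality $\pi_{1}(\underline{p},\overline{p})>\pi_{2}(\underline{p},\overline{p})$ for all $\underline{p}<\overline{p}$, which contradicts the preceding bound and rules out case (ii). I plan to do this by explicit computation: first solve the buyer's immediate-purchase indifference $v^{\ast}-\underline{p}=E[\max(v-\underline{p},v^{\ast}-\overline{p},0)]$, which with $\delta=\overline{p}-\underline{p}$ yields $v^{\ast}=1+\delta-\sqrt{2\delta}$ when $v^{\ast}\ge\overline{p}$ and $v^{\ast}=\underline{p}+(1-\underline{p})^{2}/2$ otherwise; then write $\pi_{1}$ as immediate sales at $\underline{p}$ (for $v_{2}\ge v^{\ast}$) plus return sales at $\overline{p}$ (for $v_{2}\in[\overline{p},v^{\ast})$ with $v_{1}\le v_{2}-\delta$), and $\pi_{2}$ as first-visit sales at $\underline{p}$ conditional on the buyer having continued from the other firm; finally simplify $\pi_{1}-\pi_{2}$ in each regime using the explicit $v^{\ast}$ and verify positivity. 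I expect the $v^{\ast}\ge\overline{p}$ regime to be algebraically the heaviest part and the most error-prone; in the $v^{\ast}<\overline{p}$ regime the buyer never returns, and positivity reduces to the elementary inequality $v^{\ast}(2-\underline{p})<1$.
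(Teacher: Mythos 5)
Your overall architecture differs from the paper's in an interesting way: the paper uses its Lemma \ref{order} in both directions (shave the return price to be visited first, raise it to be visited second) to force $\pi_1=\pi_2$ in a putative discriminatory equilibrium, and then compares the profit from a deviation to the \emph{uniform} price $\underline{p}$ against $\pi_2$ (the trapezoid versus triangle in Figure \ref{fig2}). You instead use only the ``shave'' direction to get $\pi_1\le\pi_2$ and then compute $\pi_1-\pi_2$ directly. Where it applies, your route is sound and arguably leaner: for the regime $v^{\ast}>\overline{p}$ one can reduce $\pi_1-\pi_2$ (uniform case) to $\delta\bigl[\tfrac{1+\underline{p}^{2}}{2}+\delta-\sqrt{2\delta}\bigr]$ with $\delta=\overline{p}-\underline{p}$, and since $\delta-\sqrt{2\delta}\ge-\tfrac12$ this is at least $\delta\underline{p}^{2}/2>0$, so the heavy algebra you anticipate does close. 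Your Case (i) matches the paper's (implicit) treatment.

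The genuine gap is in the regime $v^{\ast}\le\overline{p}$. There the buyer who visits a firm first \emph{never} exercises the return option (she continues only when $v_i<v^{\ast}\le\overline{p}$, so returning at $\overline{p}$ is dominated). Consequently her expected utility from an order is locally \emph{constant}, not strictly decreasing, in the first-visited firm's return price, and shaving that price by a small $\varepsilon$ leaves her exactly indifferent between the two orders. Your key step ``the buyer strictly prefers to visit firm~2 first'' fails, the deviation attracts nobody, and the bound $\pi_1\le\pi_2$ is not established. Your verified inequality $v^{\ast}(2-\underline{p})<1$, i.e.\ $\pi_1>\pi_2$, then yields no contradiction in this regime. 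This is precisely why the paper's Lemma \ref{order} carries the hypothesis $\overline{p}_i<v_i^{\ast}$ for the relevant strictness, and why its Step~2 treats $v^{\ast}\le\overline{p}$ separately with a \emph{large} deviation: cutting the return price all the way to $\underline{p}$ creates a return option that is exercised with positive probability, so it does strictly attract the buyer, and the resulting profit $\tfrac{1-F^{2}(\underline{p})}{2}\underline{p}$ strictly exceeds the putative equilibrium profit (which loses the buyer with $\underline{p}<v_i\le v^{\ast}$ and $v_{-i}<\underline{p}$). You need to import an argument of this kind for that regime; the small-$\varepsilon$ limit cannot do the work there. (Separately, note the degenerate boundary $\underline{p}=0$, where your formula gives $\pi_1-\pi_2=\delta\bigl[\tfrac12+\delta-\sqrt{2\delta}\bigr]$, which vanishes at $\delta=\tfrac12$; that case needs a one-line direct dismissal.)
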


Intuitively, a pair $\underline{p}<\overline{p}$, where $\overline{p}\geq 1$ prohibits any returning, cannot be a symmetric equilibrium, since a unilateral deviation to the uniform price of $\underline{p}$ makes the total profit greater, and the total profit is shared equally between the two sellers, as is the profit in the putative equilibrium $\underline{p}<\overline{p}$. So, suppose that $\overline{p}<1$. Then, $\underline{p}<\overline{p}$ can be a symmetric equilibrium strategy only if each seller's profit would be the same regardless of whether she is visited  first or  second. Otherwise, any seller could profitably deviate by increasing or decreasing $\overline{p}$, depending on whether the profit when visited first falls below or exceeds that obtained when visited second. Thus, one can compare the profit under unilateral deviation to a uniform price of $\underline{p}$, in which case the deviating firm is always visited first, with that under the putative equilibrium contingent on this firm being visited second. 

This deviation has some positive and some negative effect on the profit of the deviating seller, say seller 1: (a) the seller now sells to the buyer who values its product more, but who would be deterred from searching by the pricing policy of seller 2, and who would not visit seller 1 as second; (b) the seller loses the buyer who values its product less but who does not return to seller 2, after visiting its location as first, because of the greater price of seller 2 upon return. For some distributions of valuations and some pairs of prices $\underline{p}<\overline{p}$, the gain from (a) is smaller than the loss from (b), but the former effect always dominates the latter for the uniform distribution.


\bigskip

\noindent\textbf{Proof:} Suppose there is an equilibrium in which $\underline{p}_{1}=\underline{p}_{2}<\overline{p}_{1}=\overline{p}_{2}$. We will argue that each firm can profitably deviate to posting the smaller price for all customers.

Observe first that when firms  offer different prices, the buyer's search strategy can be characterized by cutoffs $v_{i}^{\ast }$. If she visits firm $i$ first, she will stop search and buy from this firm if $v_{i}\geq v_{i}^{\ast }$, and continue search and pick the product with a higher surplus otherwise. Under symmetric strategies, $v_{i}^{\ast}=v_{-i}^{\ast }:=v^{\ast }$ and the buyer is indifferent between which firm to visit first. For all $\underline{p}\leq \overline{p}\in (0,1)$, $v^{\ast }$ is the solution of the following equation:
\begin{equation}
\int_{0}^{1}\max \{x-\underline{p},v^{\ast }-\overline{p},0\}dF(x)=v^{\ast }-%
\underline{p},  \label{*}
\end{equation}
Note that $v^{\ast }>\underline{p}$, but it can happen that $v^{\ast }\leq \overline{p}$, in which case the buyer (almost) never returns to the firm visited first.

\bigskip

\noindent\textbf{Step 1:} We first state a lemma that connects a firm's profit from being visited first and second:

\begin{lemma}\label{order} 
Suppose that: (a) each firm offers two different prices, (b) the buyer is indifferent between which firm to visit first, and (c) firms' prices at the first visit are lower than $1$ (the upper bound on the buyer's value). Then, each firm can guarantee itself to be visited first by reducing its price upon return.

If, in addition, $\overline{p}_{i}<v_{i}^{\ast }$ (that is, there is a strictly positive probability that the buyer after visiting firm $i$ first will return to the firm), then firm $i$ can also guarantee itself to be visited second by
raising its price upon return.
\end{lemma}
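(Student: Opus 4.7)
The plan leverages a clean asymmetry between the two ``which-firm-to-visit-first'' utilities. Let $U_i$ denote the buyer's expected utility of visiting firm $i$ first and $U_j$ the expected utility of visiting firm $j$ first. Critically, $U_j$ does not depend on $\overline{p}_i$ at all: along the visit-$j$-first path, the buyer who later travels to $i$ is a first-time visitor and pays $\underline{p}_i$, so $\overline{p}_i$ never appears. By contrast, $U_i$ depends on $\overline{p}_i$ through the buyer's option to return to $i$ after visiting $j$. Starting from indifference $U_i = U_j$ granted by assumption (b), I would shift only $U_i$ by perturbing $\overline{p}_i$.

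For the first claim, I would consider the most aggressive reduction $\overline{p}_i' = \underline{p}_i$, turning firm $i$ into a uniform-pricing firm. With no return penalty at $i$, the buyer who visits $i$ first always continues to $j$ (since continuing is weakly dominant) and then selects $\max\{v_i - \underline{p}_i, v_j - \underline{p}_j, 0\}$; so $U_i' = E[\max\{v_i - \underline{p}_i, v_j - \underline{p}_j, 0\}]$, the full-information surplus. I then show $U_j$ strictly falls short of this value. Writing $U_j = E_{v_j}\max\{v_j - \underline{p}_j,\, E_{v_i}\max\{v_i - \underline{p}_i, v_j - \overline{p}_j, 0\}\}$, I would compare the integrand pointwise with $g(v_j) := E_{v_i}\max\{v_i - \underline{p}_i, v_j - \underline{p}_j, 0\}$: both options available to the buyer (buy now at $\underline{p}_j$, or visit $i$ with inflated return price $\overline{p}_j$) are dominated by $g(v_j)$, and the dominance is strict on $v_j \in (\underline{p}_j,\, \min\{1,\, 1 + \underline{p}_j - \underline{p}_i\})$, where $\overline{p}_j > \underline{p}_j$ strictly forgoes the return-to-$j$ surplus and $g(v_j) > v_j - \underline{p}_j$ because $v_i - \underline{p}_i > v_j - \underline{p}_j$ with positive probability. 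This interval is nonempty by assumptions (a) and (c), yielding $U_i' > U_j$; by continuity, the strict inequality persists for $\overline{p}_i$ in a one-sided neighborhood of $\underline{p}_i$.

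For the second claim, an envelope argument handles matters cleanly. Since the buyer's strategy is optimal at the incumbent prices, a small increase $d\overline{p}_i > 0$ changes $U_i$ only through the direct payoff effect, $\partial U_i / \partial \overline{p}_i = -\Pr(\text{buyer returns to firm } i)$. Under the hypothesis $\overline{p}_i < v_i^*$, the set $\{v_i \in (\overline{p}_i, v_i^*)\}$ has positive probability, and conditional on such $v_i$ the event $\{v_j - \underline{p}_j < v_i - \overline{p}_i\}$ has positive probability too, so the return event has strictly positive probability and the derivative is strictly negative. Combined with $\partial U_j / \partial \overline{p}_i = 0$ and $U_i = U_j$ initially, any sufficiently small increase in $\overline{p}_i$ makes $U_i < U_j$, so the buyer strictly prefers to visit $j$ first.

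The main obstacle is the pointwise-plus-positive-measure comparison underlying the first claim: one must rule out strict inequality being lost because the buyer might, for some $v_j$, pick the ``buy now'' branch rather than the ``visit $i$'' branch, so the argument has to dominate both branches simultaneously on a common positive-measure set. The second claim, by contrast, reduces to a transparent one-line envelope calculation once the asymmetry $\partial U_j / \partial \overline{p}_i = 0$ is observed.
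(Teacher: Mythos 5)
Your proof is correct, and it rests on the same structural observation as the paper's own argument: the buyer's payoff $U_j$ from visiting firm $j$ first is independent of $\overline{p}_i$, so starting from the indifference $U_i=U_j$, any perturbation of $\overline{p}_i$ that strictly moves $U_i$ determines the search order. Where you differ is in how the strict movement of $U_i$ is established. The paper uses one revealed-preference argument for both directions: let the buyer reuse her old continuation strategy at the new return price, which weakly improves (resp.\ worsens) her payoff, then invoke optimality, with strictness coming from the return event being nontrivial. Your second claim --- the envelope derivative $\partial U_i/\partial\overline{p}_i=-\Pr(\text{return})<0$ under $\overline{p}_i<v_i^{\ast}$ --- is just the differential version of that, so there the two arguments essentially coincide (you should note that the return probability stays positive for sufficiently small increases, which follows from continuity of the cutoff). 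For the first claim your route is genuinely different and in fact more robust: the first part of the lemma does \emph{not} assume $\overline{p}_i<v_i^{\ast}$, and the paper itself observes that $v^{\ast}\leq\overline{p}$ is possible, in which case the buyer never returns and a \emph{small} reduction in $\overline{p}_i$ leaves $U_i$ unchanged; the paper's assertion that the return event ``is nontrivial'' is not justified for arbitrary reductions in that case. By jumping to the extreme deviation $\overline{p}_i^{\prime}=\underline{p}_i$ and comparing the full-information surplus directly to $U_j$ via the two-branch pointwise dominance on $(\underline{p}_j,\min\{1,1+\underline{p}_j-\underline{p}_i\})$, you get strictness from conditions (a) and (c) alone, which is exactly what the first part requires. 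The one hypothesis you should make explicit is that the positive-measure step uses that $F$ has full support on $[0,1]$.
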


\begin{proof}
Note first that by changing its price upon return, the firm does not affect the buyer's payoff if the buyer visits the firm's opponent first. So, suppose that the buyer visits first the firm that reduces its price upon return. Let it be firm 2. With no loss of generality, assume that $\overline{p}_{2}\leq 1$.\footnote{This is without loss of generality, because $\overline{p}_{2}>1$ and $\overline{p}_{2}=1$ lead to exactly the same outcome.} We claim that then the buyer's payoff is greater when the price is $\overline{p}_{2}^{\prime }<$ $\overline{p}_{2}$ than when it is $\overline{p}_{2}$. Indeed, suppose that the price upon return is $\overline{p}_{2}$, and consider an optimal continuation strategy of the buyer after she decides to visit firm 2 as first. If the buyer uses the same continuation strategy when the price drops to $\overline{p}_{2}^{\prime }$, then the payoff of the buyer is no smaller because she will pay less if she returns. However, an optimal continuation strategy when the price upon return is $\overline{p}_{2}^{\prime }$ gives a  payoff which is no smaller (it is actually strictly greater, because the event in which buyer returns in nontrivial). Therefore, visiting firm 2 first gives the buyer a higher expected payoff than visiting firm 1 first. An analogous argument implies that when $\overline{p}_{2}<1$, by raising this price to $\overline{p}_{2}^{\prime }>\overline{p}_{2}$ firm 2 reduces the buyer's payoff when she visits firm 2 first.\bigskip 
\end{proof}

\noindent\textbf{Step 2:} We first consider the scenario in which $v^{\ast}\leq \overline{p}$. Then each firm has an incentive to deviate to $\underline{p}_{i}^{\prime }=\overline{p}_{i}^{\prime }=\underline{p}$. To see why, note that before the deviation, the profit is less than $\frac{1-F^{2}(\underline{p})}{2}\underline{p}$ because there is a chance that a buyer with $\underline{p}<v_{i}\leq v^{\ast }$ will visit the other seller but learn that $v_{-i}<\underline{p}$. After the deviation, the seller will be visited first and the profit is exactly $\frac{1-F^{2}(\underline{p})}{2}\underline{p}$, which is strictly greater. Thus we can conclude that in equilibrium, $v^{\ast }\leq \overline{p}$ cannot happen.

\bigskip

\noindent \textbf{Step 3:} We are left with the case $v^{\ast }>\overline{p}$, in which there is a strictly positive probability that the buyer will return to the firm she visits first. We can invoke Lemma \ref{order} and claim that each firm's profit from the first or the returning visit should be the same. Now still consider the deviation of setting $\underline{p}_{i}^{\prime}=\overline{p}_{i}^{\prime}=\underline{p}$. We compare the profit after the deviation with the profit from being visited second before the deviation, and we show that the deviation is profitable if the following condition is satisfied:
\begin{equation}\label{**}
\int_{\underline{p}}^{\overline{p}}[F(x)-F(\underline{p})]dF(x)+\int_{%
\overline{p}}^{v^{\ast }}[F(x)-F(x-\overline{p}+\underline{p})]dF(x)
<\int_{v^{\ast }}^{V}[1-F(x)]dF(x).
\end{equation}

This is a condition on $F$ that is required for our argument. We show later that this condition is satisfied when $F$ is uniform. 

\begin{figure}[h]
    \centering
    \includegraphics[width=0.5\linewidth]{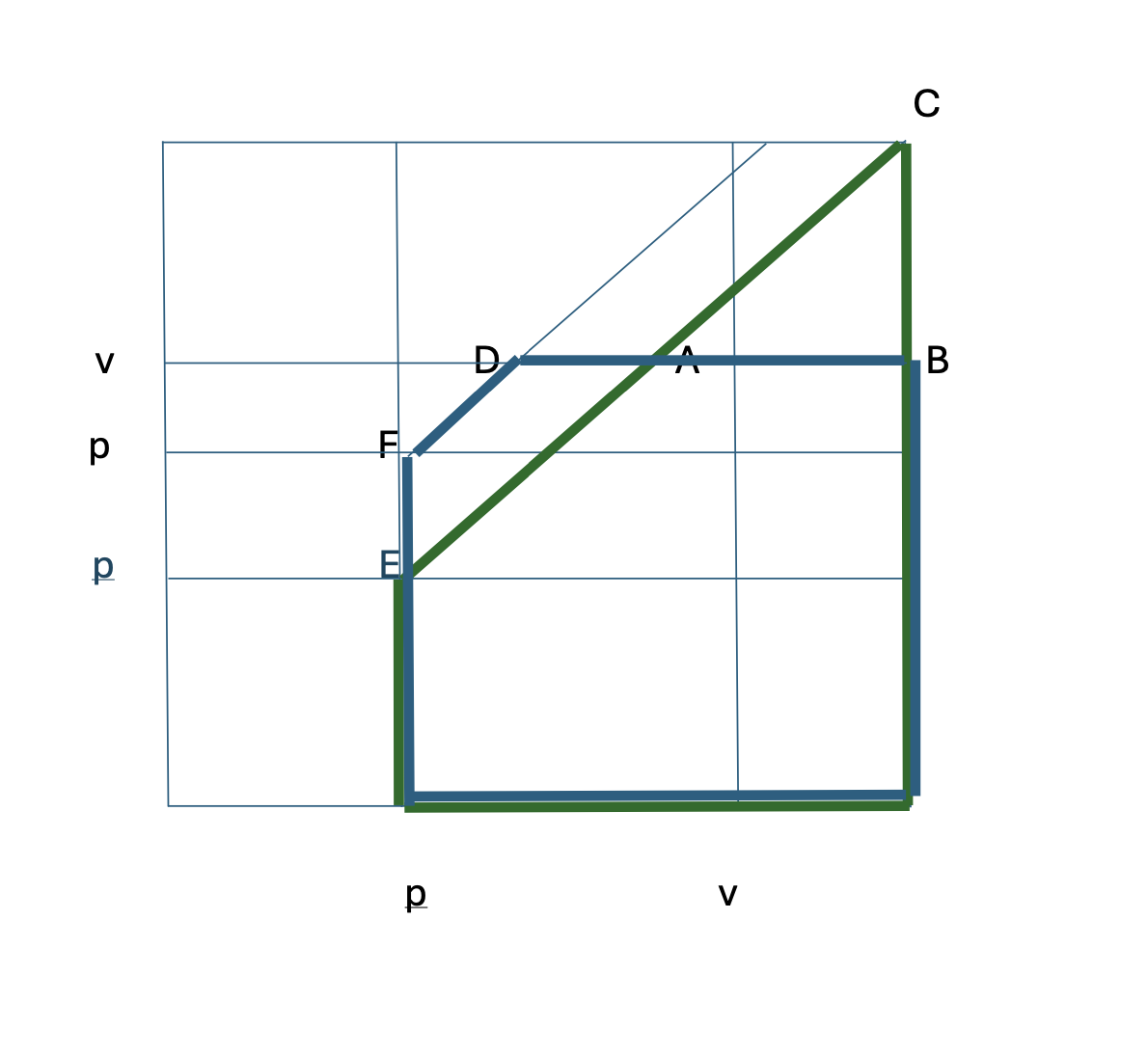}
    \caption{The demand from being visited second before the deviation is surrounded by the blue lines, and the demand from being visited first after the deviation is surrounded by the green lines.}
    \label{fig2}
\end{figure}

The reason for a unilateral deviation to  $\underline{p}_{i}^{\prime }=\overline{p}_{i}^{\prime }=\underline{p}$ being profitable is easier to see with Figure \ref{fig2}. In the figure, the demand from being visited second before the deviation is surrounded by the blue lines, and the demand from being visited first after the deviation is surrounded by the green lines. The deviation is profitable if the area of the trapezoid ADFE, which is the LHS of (\ref{**}), is strictly smaller than the area of the triangle ABC, which is the RHS of (\ref{**}). This completes the proof.

Finally, we will show that when \(F\) is the uniform distribution, Condition (\ref{**}) is satisfied. Indeed, the definition of $v^{\ast }$ (condition (\ref{*})) is equivalent to: 
\begin{equation*}
\int_{0}^{v^{\ast }-\overline{p}+\underline{p}}(v^{\ast }-\overline{p})dx+\int_{v^{\ast }-\overline{p}+\underline{p}}^{1}(x-\underline{p})dx=v^{\ast }-\underline{p},
\end{equation*}
That is, 
\begin{equation*}
(v^{\ast }-\overline{p})(v^{\ast }-\overline{p}+\underline{p})+\frac{1}{2}(1-\underline{p})^{2}-\frac{1}{2}(v^{\ast }-\overline{p})^{2}=v^{\ast }-\underline{p},
\end{equation*}
and after rearrangement, we get: 
\begin{equation}
\frac{1}{2}(v^{\ast })^{2}+v^{\ast }(\underline{p}-\overline{p}-1)+\frac{1}{2}\overline{p}^{2}-\overline{p}\underline{p}+\frac{1}{2}+\frac{1}{2}\underline{p}^{2}=0.  \label{***}
\end{equation}

And we can directly calculate the difference between the LHS and RHS of (\ref{**}) as follows: 
\begin{align*}
& \int_{\underline{p}}^{\overline{p}}[F(x)-F(\underline{p})]dF(x)+\int_{\overline{p}}^{v^{\ast }}[F(x)-F(x-\overline{p}+\underline{p})]dF(x)-\int_{v^{\ast }}^{1}[1-F(x)]dF(x) \\
=& \frac{1}{2}(\overline{p}-\underline{p})^{2}+(\overline{p}-\underline{p})(v^{\ast }-\overline{p})-\frac{1}{2}(1-v^{\ast })^{2}=-\frac{1}{2}(v^{\ast
})^{2}-v^{\ast }(\underline{p}-\overline{p}-1)+\frac{1}{2}\underline{p}^{2}-\frac{1}{2}\overline{p}^{2}-\frac{1}{2},
\end{align*}
and by (\ref{***}), we get $\underline{p}(\underline{p}-\overline{p})<0$,
which says the deviation is indeed profitable. Thus, we have shown that if $F$ is uniform, there exists no symmetric equilibrium in which each firm offers different prices. \qed

\bigskip
\noindent\textbf{Remark 2.} When condition (\ref{**}) is satisfied,  the unilateral deviation to the lower price when each firm offers two different prices is indeed profitable. However, even when this condition is violated, there may still be no symmetric equilibrium in which each firm offers two different prices, because there are other possibly profitable unilateral deviations. Unfortunately, checking the potential profitability of those other deviations turns out intractable.

\section{Results for \(c>0\)}\label{s5}

\subsection{Modeling cost of visiting sellers}

In this paper, we argue that the competition among firms for being visited first by customers may have striking effects on market outcomes. To demonstrate it, we contrast our result with that in \cite{armstrong2016search}. They numerically find search deterrence as a symmetric equilibrium phenomenon for the uniform distribution of valuations and an interval of search costs $c\geq 0$. We analytically find no search deterrence as the unique symmetric equilibrium for search cost $c=0$. 

Although the results for uniform distribution of valuations and $c=0$ provide a sufficient contrast, the reader (as well as us) would like to know whether our result can be extended to $c>0$. Intuitively, it seems that exogenous search costs should escalate the competition for being visited first, as customers want to minimize the number of times they pay these costs. However, like \cite{armstrong2016search}, we find the model analytically intractable, and we failed to solve it formally for $c>0$. Even when sellers are allowed to charge only single prices, our game in which firms first commit to their prices has no symmetric equilibrium in pure strategies. Indeed, the seller that sets a lower price is visited first, and when prices are positive, each seller profitably deviates by slightly reducing its price.\footnote{The former statement follows, because visiting first the seller who offers a lower price gives a better chance for saving the cost of visiting the seller's opponent. The latter statement follows because there are valuation pairs such that the valuation of the product of the seller visited first is smaller but the buyer buys from this seller to avoid the cost of visiting the opponent.} But if both prices are zero, each seller profitably deviates to a small positive price, even though the buyer visits its opponent first. More specifically, it can be shown that when the valuations are uniformly distributed on $[0,1]$, each seller chooses its price randomly from a distribution with support on an interval $[p,P]$ where $0<p<P$. 

To make the model somewhat more tractable, we assume that the buyer faces the Hotelling's travelling cost.  That is, the buyer's location \(x\) is randomly drawn from the uniform distribution on \([0,1]\), and her cost of visiting seller 1 and 2 are $cx$ and \(c(1-x)\), respectively. An alternative modelling choice would be to follow \cite{choi2018consumer} and assume that the buyer's value of each seller's product is the sum of: (i) a component privately known to the buyer, which the buyer learns up front; and (ii) a component that the buyer can learn only when she visits the seller. This alternative is probably more literal but also more involved, and we conjecture that it would lead to the same result.  As \cite{armstrong2016search}, we will keep assuming that the cost of returning to the seller who has been visited is zero.

\subsection{An anti-discriminatory effect of the competition to be visited first}

We argue that the competition for being visited first has an anti-discriminatory effect even when the cost of visiting firms is positive. More specifically, price discrimination is less likely to occur (and if it occurs, its extent is smaller) when firms commit to pricing strategies than when they cannot (before the buyer decides which seller to visit first). The reason is that with commitment sellers have an additional incentive for reducing the price gap between that for first comers and that for returning customers to encourage customers to visit them first. To show this, consider a symmetric equilibrium in which \(\underline{p}<\overline{p}\) of the model without commitment. In such an equilibrium each seller is equally likely to be visited first; that is, seller 1 is visited first if $x<1/2$ and seller 2 is visited first if $x>1/2$.  Without commitment, firms cannot affect the buyer's decision on whom to visit first. But with commitment, by reducing the price on return, each firm can attract  the buyer with \(x\) around \(1/2\) who initially visits its opponent first. If the profit collected from the buyer who visits first is greater than that from the buyer who visits second, such a deviation is profitable. We state this intuition formally in the following proposition:

\begin{proposition}\label{p5} 
Let \(\underline{p}<\overline{p}\) be the prices in a symmetric equilibrium of the model without commitment. Suppose that the buyer with $x=1/2$ prefers to visit a seller to visiting no seller. Suppose further that each seller's profit from $x=1/2$ is strictly greater when the buyer visits the seller first compared to when the buyer visits the seller second. Then, each seller has an incentive to deviate and commit to a slightly lower price upon return. 
\end{proposition}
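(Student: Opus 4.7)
My plan is to compute the derivative of seller 1's profit under a commitment deviation to the return price $\overline p - \epsilon$, evaluated at the no-commitment equilibrium $(\underline p, \overline p)$, and to show this derivative is strictly positive via an envelope theorem argument. Two features of the Hotelling setup make the bookkeeping clean: returning to a previously visited seller is free, so seller 1's per-buyer profit when visited first, $\pi^F(\overline p_1)$, and when visited second, $\pi^S$, do not depend on the buyer's location $x$; and $\pi^S$ does not depend on $\overline p_1$ at all, since a customer who arrives at seller 1 from seller 2 never pays $\overline p_1$.

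First I would record the no-commitment first-order condition. In the no-commitment game the buyer cannot use prices to choose visit order, so buyers split at $x=1/2$ and seller 1 solves $\max_{\overline p_1}\,\tfrac12\pi^F(\overline p_1)+\tfrac12\pi^S$; interior optimality of $\overline p$ yields $(\pi^F)'(\overline p)=0$. This is the envelope ingredient: to first order in $\epsilon$, reducing $\overline p_1$ by $\epsilon$ does not change the profit extracted from buyers who already visit seller 1 first.

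Next I would track how the cutoff moves once seller 1 commits to $\overline p-\epsilon$. By the buyer-side envelope theorem, $\partial U_1/\partial\overline p_1=-r$, where $U_1$ is the indirect value (gross of Hotelling cost) of visiting seller 1 first and $r>0$ is the equilibrium return probability, positive as long as $v^\ast>\overline p$, which is the interior regime highlighted in the proof of Proposition~\ref{p4}. Solving $U_1+r\epsilon - cx = U_2 - c(1-x)$ with $U_1=U_2$ at $\epsilon=0$ gives $x^\ast(\epsilon)=\tfrac12 + r\epsilon/(2c) + o(\epsilon)$; the hypothesis that the buyer at $x=1/2$ prefers visiting some seller ensures this cutoff stays interior (the ``do not visit'' margin is not binding). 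Hence a mass $r\epsilon/(2c)+o(\epsilon)$ of buyers just above $1/2$ switches from visiting seller 2 first to visiting seller 1 first, each contributing $\pi^F(\overline p)-\pi^S>0$ by the second hypothesis. Combining with the $o(\epsilon)$ effect on the original visited-first buyers,
\begin{equation*}
\frac{d\Pi_1^C}{d\epsilon}\bigg|_{\epsilon=0} \;=\; \frac{r}{2c}\bigl[\pi^F(\overline p)-\pi^S\bigr] \;-\; \tfrac12\,(\pi^F)'(\overline p) \;=\; \frac{r}{2c}\bigl[\pi^F(\overline p)-\pi^S\bigr]\;>\;0,
\end{equation*}
so the deviation is profitable for all sufficiently small $\epsilon>0$.

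The main obstacle I foresee is justifying the envelope formulas $(\pi^F)'(\overline p)=0$ and $\partial U_1/\partial \overline p_1=-r$ rigorously, since both require that the buyer's optimal continuation strategy after the first visit varies smoothly with $\overline p_1$ and that $\pi^F$ is differentiable at $\overline p$. With $F$ admitting a bounded density that is bounded away from zero on $[0,1]$, these regularity conditions hold and the computation above is legitimate; once they are in place, strict positivity of the derivative closes the argument.
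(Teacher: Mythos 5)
Your argument for the case $\overline{p}<1$ with a strictly positive return probability is essentially the paper's proof, made explicit: the paper's ``the effect of a unilateral small reduction in $\overline{p}$ is of an order lower than 1'' is your producer-side envelope condition $(\pi^F)'(\overline{p})=0$, and its ``the buyer's gain from the reduction in $\overline{p}$ is of a first order \ldots\ thus the probability of attracting the buyer is also of a first order'' is your $\partial U_1/\partial\overline{p}_1=-r$ together with $x^{\ast}(\epsilon)=1/2+O(\epsilon)$. Packaging the conclusion as a one-line derivative is a clean rendering of the same decomposition.

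Two points need attention. First, a genuine gap: the proposition does not assume $\overline{p}<1$ or $v^{\ast}>\overline{p}$, and when $\overline{p}=1$ (or more generally when the equilibrium return probability is zero) your derivative equals $\frac{r}{2c}\left[\pi^F-\pi^S\right]=0$, so the first-order computation is silent exactly where your own caveat (``positive as long as $v^{\ast}>\overline{p}$'') flags trouble. The paper treats this boundary case separately: committing to a return price slightly below $1$ costs the seller nothing (it was not selling to returning customers anyway), strictly raises the value of visiting it first (the option to return now has positive value), and therefore attracts a positive --- possibly only second-order --- mass of buyers near $x=1/2$, each worth $\pi^F-\pi^S>0$. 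You need to add this case. Second, a slip that happens not to be fatal: $\pi^F$ \emph{does} depend on $x$ in the Hotelling model, because the buyer's decision to continue to the other seller depends on the remaining travel cost $c(1-x)$ (see the cutoff $v_{1}(x)=1-\Delta p-\sqrt{2c(1-x)}$ in the online appendix); only $\pi^S$ is location-free, since the return trip is free and the travel cost already paid is sunk. The envelope step survives because the no-commitment first-order condition holds for the integral of $\pi^F(\cdot,x)$ over $x\in[0,1/2]$, and the switching buyers are concentrated at $x=1/2$ --- which is precisely why the hypothesis is stated for $x=1/2$. Relatedly, the coefficient $r/(2c)$ in $dx^{\ast}/d\epsilon$ is not exact, since $d(W_1-W_2)/dx$ is not $-2c$ once the continuation travel probabilities are accounted for; but only the sign and the first-order magnitude of the displacement matter for the conclusion.
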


Before providing the proof, we will discuss the two assumptions in the statement of Proposition \ref{p5}. The assumption that the buyer with $x=1/2$ prefers to visit a seller to visiting no seller makes the analysis interesting. If the buyer with $x=1/2$ prefers to visit no seller, then each seller has its own niche: Seller 1 serves the buyer with small values of $x$, and seller 2 serves the buyer with large values of $x$. Therefore, small changes in the price upon return have no effect. The assumption that each seller's payoff is greater when visited first than when visited second is consistent with the idea of search deterrence. Indeed, a seller can benefit from search deterrence only when the seller is visited first. However, we failed to prove (or disprove) that the assumption is satisfied under general distribution of valuations.  The only known equilibria of the model without commitment are those found numerically (for the uniform distribution of valuations) by \cite{armstrong2013search}. For some of those equilibria, we checked (also numerically) that the assumption is satisfied. For example, for $c=0$, $\underline{p}=0.45$, $\overline{p}=0.51$,\footnote{See Figure 6, panel a on page 34 of  \cite{armstrong2013search}.}  this yields $v^{\ast }=0.71359$;  the profit of the firm visited first is 0.18627 while the profit of the firm visited second is 0.16729. By continuity, this computation guarantees that the assumption is also satisfied for small positive costs $c$. After the proof, we will also show analytically using our result in Section \ref{s4} that our assumption is satisfied for the uniform distribution of valuations and small values of costs $c$.  

\bigskip

\noindent\textbf{Proof:} Assume first that $\overline{p}<1$. When sellers cannot commit to pricing policies, \(\underline{p}<\overline{p}\) is each seller's best response to the same strategy of the opponent. So, the effect of a unilateral small reduction in $\overline{p}$ is of an order lower than 1. Reducing $\overline{p}$ with commitment has the same effects as with no commitment, except an additional effect which was absent in the model with no commitment, namely, it encourages the buyer to visit the deviating seller first, because the reduction of the price on return makes the buyer's loss contingent on return smaller. If a seller slightly reduces the price upon return, then the buyer with $x=1/2$, who was indifferent without commitment regarding which seller to visit first, now prefers to visit the seller who has reduced the price on return. The same is true for the buyer with any $x$ close to $1/2$. Moreover, the buyer returns to the seller visited first with positive probability. So, the buyer's gain from the reduction in $\overline{p}$ is of a first order of the reduction. Thus, the probability of attracting the buyer due to the reduction is also of a first order, and the seller makes a first-order gain from attracting such buyers by the assumption that each seller's payoff is greater when visited first than when visited second.

Suppose now that \(\underline{p}<\overline{p}=1\). By the second assumption, the seller's profit contingent on $x=1/2$ is greater when it is visited first compared to when it is visited second. By unilaterally committing to a slightly lower price upon return, the seller increases the chance that its location will be visited first. The seller does it at no loss, possibly even with a gain, because the seller was not selling to the returning customers before the deviation but may sell to some of them after the deviation. \qed

\bigskip

We will now analytically show that for the uniform distribution of valuations and $c=0$, in the model with no commitment, each seller's payoff is greater when the buyer visits the seller first compared to when the buyer visits the seller second (then by continuity, when \(c\) is sufficiently small, contingent on \(x=1/2\), the profit from being visited first is greater). Indeed, the seller whose location is visited first could deviate to charging all customers \underline{$p$}. Say this is seller 1. This deviation would result in a demand surrounded by the green lines in Figure \ref{fig2}. This demand exceeds the demand when seller 1 is visited second, which is surrounded by the blue lines in Figure \ref{fig2}. In both cases seller 1 charges \underline{$p$} the buyer who buys from the seller. So, the profit after the deviation when the seller is visited first exceeds the profit when the seller is visited second. However, note that before and after this deviation, the profit for seller 1 obtained from being visited second is unchanged. Since no deviation is profitable in equilibrium, the profit from being visited first must be weakly greater before the deviation, which is again higher than the profit from being visited second, and this  gives the result for $c=0$. 

\bigskip 

Thus, the observation that commitment reduces incentives for discriminatory practices holds true even when the cost of visiting is positive, at least under the assumptions of Proposition \ref{p5}. Moreover, if the non-discriminatory equilibrium for $c=0$ is unique, as is the case when the distribution of valuations is uniform, by upper hemi-continuity arguments, if any discrimination at all takes place in equilibrium for small costs, then the discrimination must be small. More precisely, we have the following observation:

\begin{proposition}\label{p6}
Suppose that for $c=0$ there is a unique
symmetric equilibrium in pure strategies in which \(\underline{p}=\overline{p}\). Then, for every $\varepsilon >0$ there is a $\delta >0$ such that if $c<\delta $ then $\left\vert \overline{p}-\underline{p}\right\vert<\varepsilon $ in every symmetric equilibrium in pure strategies.
\end{proposition}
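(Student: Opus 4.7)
The statement is a standard upper hemi-continuity result, so the plan is to argue by contradiction: assume the conclusion fails, extract a limit of equilibria as $c\to 0$, and show that this limit must itself be a symmetric pure-strategy equilibrium of the $c=0$ game, which by hypothesis has $\underline{p}=\overline{p}=p^{\ast}$, contradicting the assumed gap of at least $\varepsilon$.

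Concretely, first I would suppose for contradiction that there exist $\varepsilon>0$, a sequence $c_n\downarrow 0$, and symmetric pure-strategy equilibria $(\underline{p}_n,\overline{p}_n)$ of the game with search cost $c_n$ satisfying $\overline{p}_n-\underline{p}_n\geq \varepsilon$. Since valuations lie in $[0,1]$, both prices can be restricted without loss to the compact set $[0,1]^2$ (any price strictly above $1$ is payoff-equivalent to $1$ given the zero-production-cost normalization). By Bolzano--Weierstrass, pass to a subsequence along which $(\underline{p}_n,\overline{p}_n)\to (\underline{p}^{\ast},\overline{p}^{\ast})$ with $\overline{p}^{\ast}-\underline{p}^{\ast}\geq \varepsilon$, so in particular $\underline{p}^{\ast}<\overline{p}^{\ast}$.

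Second, I would verify that $(\underline{p}^{\ast},\overline{p}^{\ast})$ is a symmetric equilibrium at $c=0$. The key technical input is joint continuity of each firm's expected profit in $(c,\underline{p}_1,\overline{p}_1,\underline{p}_2,\overline{p}_2)$. Given a profile of posted prices and a travel cost $c$, the buyer's optimal search policy is described by cutoff rules (which seller to visit first, when to stop and buy, when to return), all derived from maximizing a continuous expected-payoff function over a finite menu of actions; since the distribution $F$ and the travel cost enter the indifference conditions continuously, the buyer's optimal payoff and the induced firm profits are continuous except possibly on a measure-zero set of indifference ties. The standard Berge-style argument then gives upper hemi-continuity of the best-response correspondence, and hence of the equilibrium correspondence in $c$. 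Taking limits of the equilibrium inequalities $\pi^{c_n}(\underline{p}_n,\overline{p}_n;\underline{p}_n,\overline{p}_n)\geq \pi^{c_n}(\underline{p}',\overline{p}';\underline{p}_n,\overline{p}_n)$ for every deviation $(\underline{p}',\overline{p}')$ yields the corresponding inequality at $c=0$, so $(\underline{p}^{\ast},\overline{p}^{\ast})$ is a symmetric pure-strategy equilibrium of the $c=0$ game. By the uniqueness hypothesis, $\underline{p}^{\ast}=\overline{p}^{\ast}=p^{\ast}$, contradicting $\overline{p}^{\ast}-\underline{p}^{\ast}\geq\varepsilon$.

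The main obstacle is the continuity step. Tie-breaking in the buyer's choice of which seller to visit first (and in the stop/continue decision) is genuinely discontinuous at indifference, so the profit functions are only upper (or lower) semi-continuous on a null set. For symmetric profiles the buyer is indifferent about whom to visit first with probability one under symmetric tie-breaking, but a deviating firm can induce strict preference on an open set of prices; the delicate part is to check that these discontinuities do not allow the limiting profile to fail an equilibrium inequality. I would handle this by showing that (i) at any converging equilibrium sequence the buyer's cutoffs $v^{\ast}_n$ converge (using equation (\ref{*}) and its obvious analogue with search cost), (ii) the set of deviations $(\underline{p}',\overline{p}')$ at which the deviator's profit is discontinuous in $c$ is of Lebesgue measure zero, so any strictly profitable deviation at $c=0$ can be approximated by nearby deviations that are continuity points, and (iii) these nearby deviations would already have been strictly profitable for all large $n$, contradicting that $(\underline{p}_n,\overline{p}_n)$ is an equilibrium. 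Once this continuity cleanup is done, the contradiction closes and the proposition follows.
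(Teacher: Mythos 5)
Your proposal is correct and follows essentially the same route as the paper, whose entire proof is the one-line assertion that the correspondence $c\mapsto\{\text{symmetric pure-strategy equilibrium price pairs}\}$ is upper hemi-continuous; you have simply unpacked that assertion into the standard compactness-plus-limit-of-equilibrium-inequalities argument and flagged the tie-breaking continuity issue that the paper leaves implicit. No gap relative to the paper's own (terser) argument.
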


\noindent\textbf{Proof}: It is easy to check that the correspondence that assigns to every $c\geq 0$, the set of symmetric, pure-strategy equilibrium price pairs  \(\overline{p} \geq \underline{p}\)  is upper hemi-continuous. \qed 

\bigskip

So, the only question for the uniform distribution of valuations is whether the additional disincentive for discriminatory policies coming from the competition to be visited first (under commitment) only reduces the extent to which firms discriminate, with the discrimination vanishing for small costs, or it results in no discrimination, at least for small costs.\footnote{One can also argue that with zero cost, the buyer should visit the two firms at the same time, and buy the product with a higher valuation at the price for first comers. Even putting aside the feasibility of this strategy in practice, the contrast between commitment and no commitment in sequential search remains important because of the difference between the outcomes in the settings with small costs.} We partially address this question in the subsequent subsection by showing that there exist non-discriminatory equilibria.

\subsection{Numerical analysis}\label{s53}

For uniform distribution of valuations and a small search cost $c$, our numerical analysis suggests that there is a symmetric equilibrium in which firms offer the same price for first comers and returning customers. More specifically, we analytically show that 
\begin{equation}
p^{\ast }=\sqrt[2]{1+\left( 1+\sqrt[2]{\frac{c}{2}}\right) ^{2}}-\left( 1+%
\sqrt[2]{\frac{c}{2}}\right)  \label{FOC}
\end{equation}
satisfies the first-order condition for a symmetric equilibrium in the auxiliary setting in which firms are only allowed to post single prices. That is, if one firm posts $p^{\ast }$, and the other firm posts $p$, then the derivative of the payoff of the latter firm with respect to $p$ is zero at $p=p^{\ast }$.  Moreover, the only $p^{\ast }$ with this property is given by (\ref{FOC}). We provide the details of this computation in the online appendix.

Our numerical exercise suggests that \underline{$p$}$=\overline{p}=p^{\ast }$ is an equilibrium in the model in which firms are allowed to engage in price discrimination. More specifically, for $c=0.01$, for which (\ref{FOC}) yields $p^{\ast }=0.4$ and a profit of 0.1680000000, we numerically checked the deviations \(\underline{p} \leq \overline{p} \in [0,1]\) belong to the grid of size 0.01, that is, they are numbers with two digits after the decimal point, and \(\underline{p} \leq \overline{p} \in [0.35,0.45]\) belong to the grid of size 0.005. None of them turned out profitable. For the grid of size of 0.01, the best deviation \(\underline{p}=\overline{p}=0.410\) yields the deviating firm a payoff of 0.1679067794 smaller than that from charging the uniform price of 0.4; and for the grid of size 0.005, the best deviation \(\underline{p}=\overline{p}=0.405\) yields the deviating firm a payoff of 0.1679769081 smaller than that from charging the uniform price of 0.4. We enclose the code for this computation in a separate zip file. 

\section{Conclusion}\label{s6}

We study buy-now discount as a search deterrence instrument in a duopoly setting. When the search cost is zero, the possibility of committing to pricing policies, or building reputation thereof makes buy-now discounts ineffective. In equilibrium, firms commit to uniform pricing policies, which stands in contrast with offering buy-now discounts in settings in which firms cannot commit. This difference comes from the fact that commitment leads to competition for being visited as first. Due to the competition, price discrimination is less likely to occur (and if it occurs, its extent is smaller) even when search costs are positve, at least for the uniform distribution of valuations.

Our findings provide an insight on why buy-now discounts, which are offered casually during one-on-one interactions between buyers and sellers, are sometimes observed in certain markets, but not in others, such as real estate. Since such buy-now discounts reduce both consumer surplus and total welfare, our findings imply that commitment is desirable from the welfare perspective; so is the possibility of building reputation.

Finally, we wish to mention that our idea of the competition among firms for being first visited by customers is likely to apply to other settings, such as advertising or obfuscation. We leave this as a promising topic for future research.

\section{Online Appendix}

\subsection{Derivation of $p^{\ast }$ from Section \protect\ref{s53}}

We first characterize the symmetric pure strategy equilibrium when firms are only allowed to set a single price. Let $p^{\ast }$ be the putative equilibrium price. Suppose now firm 1 deviates to $p^{\ast }-\Delta p$. Consider a buyer located at $x$. If she visits firm 1 first, she will buy from firm 1 without visiting firm 2 if her value of firm 1's product exceeds a cutoff $v_{1}(x)$. The buyer will visit firm 2 if $v_{1}<v_{1}(x)$. At $v_{1}=v_{1}(x)$, this buyer is exactly indifferent between buying from firm 1 and continuing search. This cutoff must satisfy the following condition:
\begin{equation*}
v_{1}(x)-p^{\ast }+\Delta p=-c(1-x)+\int_{0}^{1}\max \{v_{2}-p^{\ast},v_{1}(x)-p^{\ast }+\Delta p\}dF(v_{2}),
\end{equation*}
which gives 
\begin{equation*}
v_{1}(x)=1-\Delta p-\sqrt[2]{2c(1-x)}.
\end{equation*}

The buyer will purchase from firm 1 and pay $p^{\ast }-$ $\Delta p$ if $v_{1}>v_{1}(x)$ or $p^{\ast }-$ $\Delta p<v_{1}<v_{1}(x)$ and $v_{1}-p^{\ast}+\Delta p>v_{2}-p^{\ast }$. The expected profit obtained from this event is
\begin{equation*}
\left( p^{\ast }-\Delta p\right) \left[ 1-v_{1}(x)+\frac{1}{2}\left(v_{1}-p^{\ast }+\Delta p\right) \left( p^{\ast }+v_{1}+\Delta p\right)\right],
\end{equation*}
which by using the formula for $v_{1}(x)$ reduces to 
\begin{equation}\label{profit 1}
\left( p^{\ast }-\Delta p\right) \left[ \frac{1}{2}+c(1-x)+\Delta p-\frac{1}{2}(p^{\ast })^{2}\right].  
\end{equation}

If instead the buyer visits firm 2 first, the cutoff $v_{2}(x)$ satisfies the following condition:
\begin{equation*}
v_{2}(x)-p^{\ast }=-cx+\int_{0}^{1}\max \{v_{1}-p^{\ast }+\Delta
p,v_{2}(x)-p^{\ast }\}dF(v_{1}),
\end{equation*}
which gives 
\begin{equation*}
v_{2}(x)=1+\Delta p-\sqrt[2]{2cx}.
\end{equation*}

The buyer will purchase from firm 1 (and pay $p^{\ast }-\Delta p$) if $v_{2}<v_{2}(x)$ and $v_{1}-p^{\ast }+\Delta p>\max \{0,v_{2}-p^{\ast }\}$. Thus the expected profit of firm 1 from this event is 
\begin{equation*}
\left( p^{\ast }-\Delta p\right) \left[ v_{2}(x)\left( 1-p^{\ast }+\Delta
p\right) -\frac{1}{2}\left( v_{2}(x)-p^{\ast }\right) ^{2}\right] \text{,}
\end{equation*}
which by plugging in the expression for $v_{2}(x)$ is equal to 
\begin{equation}\label{profit 2}
\left( p^{\ast }-\Delta p\right) \left[ -cx+\frac{1}{2}+\Delta p+\frac{1}{2}\Delta p^{2}-\frac{1}{2}(p^{\ast })^{2}\right]. 
\end{equation}

Formulas (\ref{profit 1}) and (\ref{profit 2}) allow for computing the total profit of firm 1:
\begin{align*}
&\int_{0}^{x^{\ast }}\left( p^{\ast }-\Delta p\right) \left[ \frac{1}{2}+c(1-x)+\Delta p-\frac{1}{2}(p^{\ast })^{2}\right]\\
+&\int_{x^{\ast }}^{1}\left( p^{\ast }-\Delta p\right) \left[-cx+\frac{1}{2}+\Delta p+\frac{1}{2}\Delta p^{2}-\frac{1}{2}(p^{\ast})^{2}\right]\\
=&\left( p^{\ast }-\Delta p\right) \left[ \frac{1}{2}+\Delta p-\frac{1}{2}(p^{\ast })^{2}+cx^{\ast }+\frac{1}{2}\Delta p^{2}(1-x^{\ast })-\frac{c}{2}\right]
\end{align*}

Notice that the buyer will not randomize over which firm to visit first; instead there exist a cutoff $x^{\ast }$ such that when $x<x^{\ast }$, the buyer (strictly) prefers to visit firm 1 first, and when $x>x^{\ast }$, she (strictly) prefers to visit firm 2 first. At $x^{\ast }$, the buyer is indifferent, i.e., for small $\Delta p$: 
\begin{align*}
&\lbrack 1-v_{1}(x)]c(1-x)+\int_{v_{1}(x)}^{1-\Delta p}\left[\int_{v_{1}+\Delta p}^{1}\left( v_{1}-v_{2}+\Delta p\right) dv_{2}\right]dv_{1}\\
=&[1-v_{2}(x)]cx+\int_{v_{2}(x)}^{1}\left[ \int_{v_{2}-\Delta p}^{1}\left(v_{2}-v_{1}-\Delta p\right) dv_{1}\right] dv_{2},
\end{align*}
which gives 
\begin{equation*}
c(1-x)\Delta p+\frac{2}{3}c(1-x)\sqrt[2]{2c(1-x)}=-cx\Delta p+\frac{2}{3}cx\sqrt[2]{2cx}+\frac{1}{6}(\Delta p)^{3}.
\end{equation*}
The $x^{\ast }$ that solves this equation is a function of $\Delta p$, and by the implicit function theorem, 
\begin{equation*}
\frac{dx^{\ast }}{d(\Delta p)}=\frac{c-\frac{1}{2}(\Delta p)^{2}}{\frac{1}{2}(2c)^{3/2}\left(\sqrt[2]{1-x^{\ast }}+\sqrt[2]{x^{\ast }}\right)};
\end{equation*}
for $\Delta p=0$ and $x^{\ast }=1/2$, this yields $dx^{\ast }/d(\Delta p)=\frac{1}{2\sqrt[2]{c}}$.

Taking the derivative of the total profit of firm 1 with respect to \(\Delta p\) at $\Delta p=0$ gives 
\begin{align*}
&p^{\ast }\left[ 1+c\frac{dx^{\ast }}{d(\Delta p)}\right] -\left[ \frac{1}{2}-\frac{1}{2}(p^{\ast })^{2}+cx^{\ast }-\frac{c}{2}\right]\\
=&p^{\ast }\left( 1+\frac{\sqrt[2]{c}}{2}\right) +\frac{1}{2}(p^{\ast })^{2}-\frac{1}{2},
\end{align*}
and this derivative is equal to $0$ when 
\begin{equation*}
p^{\ast}=\sqrt[2]{1+\left(1+\sqrt[2]{\frac{c}{2}}\right)^{2}}-\left(1+\sqrt[2]{\frac{c}{2}}\right).
\end{equation*}

\subsection{Code for checking the existence of profitable deviations}

All code used to check for the existence of profitable deviations is provided in the zip file submitted with the paper.

\newpage
\bibliographystyle{aer}
\bibliography{references}

\end{document}